\def\vxi{{\bm{\xi}}}
\def\vH{{\bm{H}}}
\def\vA{{\bm{A}}}
\def\vS{{\bm{S}}}
\def\vell{{\bm{\ell}}}
\newcommand{\hlight}[1]{{\textbf{\textit{#1}}}}
\def\eqref#1{equation~\ref{#1}}
\def\1{\bm{1}}
\def\va{{\bm{a}}}
\def\vb{{\bm{b}}}
\def\vu{{\bm{u}}}
\def\vv{{\bm{v}}}
\def\vw{{\bm{w}}}
\DeclareMathAlphabet{\mathsfit}{\encodingdefault}{\sfdefault}{m}{sl}
\SetMathAlphabet{\mathsfit}{bold}{\encodingdefault}{\sfdefault}{bx}{n}
\def\gG{{\mathcal{G}}}
\def\gH{{\mathcal{H}}}
\def\gN{{\mathcal{N}}}
\def\gO{{\mathcal{O}}}
\newcommand{\R}{\mathbb{R}}
\newtheorem{theorem}{Theorem}[section]
\newtheorem{lemma}[theorem]{Lemma}
\newtheorem{definition}[theorem]{Definition}
\newtheorem{example}[theorem]{Example}
\newcommand{\changes}[1]{\textcolor{black}{#1}}
\title{Aligning Individual and Collective Objectives \\in Multi-Agent Cooperation}
\author{%
  Yang Li \\
  The University of Manchester \\
  \texttt{yang.li-4@manchester.ac.uk} \\
  \And
  Wenhao Zhang \\
  Shanghai Jiao Tong University \\
  \texttt{wenhao\_zhang@sjtu.edu.cn} \\
  \AND
  Jianhong Wang$^\dagger$ \\
  University of Bristol \\ The University of Manchester \\
  \texttt{jianhong.wang@bristol.ac.uk} \\
  \And
  Shao Zhang \\
  Shanghai Jiao Tong University \\
  \texttt{shaozhang@sjtu.edu.cn} \\
  \And
  Yali Du \\
  King's College London \\
  \texttt{yali.du@kcl.ac.uk} \\
  \And
  Ying Wen\thanks{Corresponding authors. $\dagger$ Jianhong Wang is a visiting researcher at the University of Manchester.} \\
  Shanghai Jiao Tong University \\
  \texttt{ying.wen@sjtu.edu.cn} \\
  \And
  Wei Pan$^\ast$\\
  The University of Manchester \\
  \texttt{wei.pan@manchester.ac.uk} \\
}
\begin{document}

\maketitle

\begin{abstract}
Among the research topics in multi-agent learning, mixed-motive cooperation is one of the most prominent challenges, primarily due to the mismatch between individual and collective goals. The cutting-edge research is focused on incorporating domain knowledge into rewards and introducing additional mechanisms to incentivize cooperation. However, these approaches often face shortcomings such as the effort on manual design and the absence of theoretical groundings. To close this gap, we model the mixed-motive game as a differentiable game for the ease of illuminating the learning dynamics towards cooperation. More detailed, we introduce a novel optimization method named \textbf{\textit{A}}ltruistic \textbf{\textit{G}}radient \textbf{\textit{A}}djustment (\textbf{\textit{AgA}}) that employs gradient adjustments to progressively align individual and collective objectives. Furthermore, we theoretically prove that AgA effectively attracts gradients to stable fixed points of the collective objective while considering individual interests, and we validate these claims with empirical evidence. We evaluate the effectiveness of our algorithm AgA through benchmark environments for testing mixed-motive collaboration with small-scale agents such as the two-player public good game and the sequential social dilemma games, Cleanup and Harvest, as well as our self-developed large-scale environment in the game StarCraft II.
\end{abstract}

\section{Introduction}
Multi-agent cooperation primarily focuses on learning how to promote collaborative behavior in shared environments. 
In general, multi-agent cooperation research is categorized into two prominent areas: pure-motive cooperation and mixed-motive cooperation~\citep{du2023review,SVO}.
Recent progress in cooperative Multi-Agent Reinforcement Learning (MARL) has been primarily focusing on pure-motive cooperation, also known as common payoff games. This game models situations where each agent's individual goal fully aligns with the collective objectives~\citep{mappo,jianhong2020Shapley,zhong2023heterogeneousagent, li2024tackling}. Nevertheless, mixed-motive cooperation is more widespread across real-world situations. It is usually defined by imperfect alignment between individual and collective rationalities~\citep{rapoport1974prisoner,SVO}.

Recent studies in mixed-motive cooperative MARL have largely employed hand-crafted designs to promote collaboration.
One popular approach is to align objectives as per existing mechanisms in cooperative games, such as reputation~\citep{Nicolas2021Cooperation}, norms~\citep{Eugene2023Collective}, and contracts~\citep{Hughes2003Hughes}. Another prevalent method leverages intrinsic motivation to align individual and collective objectives, enhancing altruistic collaboration by integrating heuristic knowledge into the incentive function.
Conventionally, some works accumulate individual rewards with the group to promote altruistic conduct~\citep{David2020Inducing,Alexander2018Prosocial,apt2014Selfishness,selfishness2024yali}. 
Furthermore, some studies derive more sophisticated preference signals from the rewards of other agents~\citep{Hughes2018Inequity,SVO}. Additionally, several approaches aim to learn the potential influences of an agent's actions on others~\citep{yang2020learning,jaques2019social,Lu2022Model}.
Most of these algorithms rely heavily on carefully crafted designs, necessitating significant human expertise and detailed domain knowledge.
On the other hand, several studies leverage Nash equilibria and related game theory concepts, such as the price of anarchy, to automatically modify rewards by learning additional weights that adjust the original objectives~\citep{gemp2020d3c,kwon2023auto}. However, finding Nash equilibria in nonconvex games presents a greater challenge compared to identifying minima in neural networks~\citep{Letcher2019Diff}.

In this study, we introduce the differentiable mixed-motive game (DMG), an effective framework for analyzing learning dynamics at both individual and collective levels. Furthermore, we derive the Altruistic Gradient Adjustment (AgA) algorithm, which aligns individual and collective objectives by modifying the gradient. We theoretically prove that AgA, with an appropriately chosen sign for the adjustment term, can successfully guide the gradient towards stable fixed points of the collective objective while considering individual interests. Additionally, empirical evidence from optimization trajectory visualizations and ablation studies validates our claims.

We also conduct comprehensive experiments to verify the effectiveness of the proposed AgA algorithm. First, optimization trajectory analysis and ablation studies validate our theoretical conclusions. Next, a series of experiments conducted in various environments demonstrate that the AgA algorithm outperforms related baselines in both gradient adjustment and mixed-motive cooperation areas. In addition to commonly used testbeds like the public goods matrix game and sequential social dilemma games (Cleanup and Harvest) \citep{ssd2017}, which are limited in terms of agent scale, action space, and task complexity, we introduce a more complex mixed-motive environment called Selfish-MMM2, an adaptation of the MMM2 map from the StarCraft II game \citep{smac}.
Selfish-MMM2 offers the following significant improvements over other mixed-motive games: it supports large-scale scenarios with 10 heterogeneous controlled agents facing 11 enemies, features a large action space with a size of $18^{10}$, which vastly exceeds the action spaces in Cleanup $9^5$ and Harvest $8^5$ that are limited to 5 homogeneous agents.
Selfish-MMM2 also has a larger action space and greater task complexity than the 10-player PD testbed (with a size of $2^{10}$) used in D3C~\citep{gemp2020d3c}, which claims to solve the large-scale problem in mixed-motive games.

In summary, the contributions of this paper are as follows:
(1) We are the first work to model the mixed-motive game as a differentiable game (to the best of our knowledge) and propose the AgA algorithm to align individual and collective objectives from a gradient perspective.
(2) We theoretically prove that, in the neighborhood of fixed points, AgA could pull the gradient toward stable fixed points of the collective objectives and push the gradient away from unstable fixed points.
(3) We introduce Selfish-MMM2, a novel large-scale mixed-motive cooperation environment, and conduct comprehensive experiments across multiple settings that verify our theoretical claims and demonstrate the superior performance of the AgA algorithm.

\section{Related Work}

\paragraph{Mixed-motive cooperation.} Mixed-motive cooperation refers to scenarios where the group's objectives are sometimes aligned and at other times conflicted~\citep{mixedmotive}.
Recently, there has been a surge in academic interest in the Sequential Social Dilemma (SSD)~\citep{ssd2017}, which expands the concept from its roots in matrix games~\citep{socialdilemmaineq} to Markov games.
Prosocial~\citep{Alexander2018Prosocial} improves collective performance by blending individual rewards to redraft the agent's overall utility. 
\changes{
Inequity aversion further integrates the concept into Markov games by adding envy (disadvantageous inequality) and guilt (advantageous inequality) rewards to the original individual rewards~\citep{Hughes2018Inequity}. 
To further promote cooperation, the gifting mechanism—a crucial strategy in mixed-motive cooperation \citep{lupu2020gifting}—allows agents to influence each other’s reward functions through peer rewarding. 
Besides, PED-DQN \citep{David2020Inducing} introduces an automatic reward-shaping MARL method that gradually adjusts rewards to shift agents’ actions from their perceived equilibrium towards more cooperative outcomes. 
}
Social Value Orientation (SVO) strategy introduces a unique shared rewards-based compensation approach, encouraging behavior modifications in line with interdependence theory~\citep{mckee2020social}. 
The LIO strategy bypasses the need to modify extrinsic rewards by empowering an agent to directly influence its partner's actions~\citep{yang2020learning}. 
Meanwhile, other studies introduce new cooperative mechanisms such as incorporating a reputation model~\citep{mckee2023multiagent}, social norms~\citep{Eugene2023Collective}. 
Recently, \citet{selfishness2024yali} proposed a selfishness level method that incorporates social welfare into individual rewards to enhance altruistic cooperation. 
Additionally, several works have explored automatically modifying rewards online. 
D3C~\citep{gemp2020d3c} learns to mix rewards to improve efficiency in a Nash equilibrium. 
\citet{kwon2023auto} extended D3C by addressing the problem of automatically modifying individual agent objectives to optimize a desired global objective.
Despite these advances, many current studies lack both cost efficiency in design and theoretical analysis of alignment and convergence. 
To address this, our study applies a gradient perspective to align goals and further investigates the learning dynamics of our proposed method.

\paragraph{Gradient-based Methods.}
Our proposed AgA is fundamentally a gradient adjustment methodology, making gradient-based optimization methods highly relevant to the context of this paper.
Methods based on gradient have been developed to find stationary points, such as the Nash equilibrium or stable fixed points.
Optimistic Mirror Descent leverages historical data to extrapolate subsequent gradients \cite{daskalakis2018training}, while \citet{gidel2020variational} extends this concept by advocating averaging methodologies and variants of extrapolation techniques.
Consensus gradient adjustment, or consensus optimization, is a technique that embeds a consensus agreement term within the gradient to ensure its convergence~\citep{Lars2017Numerics}. 
Learning with Opponent-Learning Awareness (LOLA) utilizes information from other players to compute one player's anticipated learning steps~\citep{foerster2018learning}. 
Subsequently, Stable Opponent Shaping (SOS)~\citep{letcher2021stable} and Consistent Opponent-Learning Awareness (COLA)~\citep{COLA} methods have enhanced the LOLA algorithm, targeting convergence assurance and inconsistency elimination, respectively.
Symplectic Gradient Adjustment (SGA) alters the update direction towards the stable fixed points based on a novel decomposition of game dynamics~\citep{balduzzi2018mechanics,Letcher2019Diff}. 
Recently, Learning to Play Games (L2PG) ensures convergence towards a stable fixed point by predicting updates to players' parameters derived from historical trajectories~\citep{Chen2023Learning}.
However, these methods, while focusing on zero-sum or general sum games, could potentially act counter to their individual interests, as they may prioritize stability over minimizing personal loss.
Our research specifically targets mixed-motive setting with the aim of reconciling individual and collective objectives.


\section{Preliminaries}
\subsection{Differential Game}
The theory of differential games was initially proposed by \citet{isaacs1965differential}, aiming to expand the scope of sequential game theory to encompass continuous-time scenarios. 
Through the lens of machine learning, we formalize the differential game, as shown in Definition \ref{def:diff_game}. 
\begin{definition}[Differential Game~\citep{balduzzi2018mechanics,Letcher2019Diff}]
\label{def:diff_game}
A differential game could be defined as a tuple $\{\gN, \vw, \vell\}$, where $\gN=\{1,\dots,n\}$ is the set of players. The parameter set $\vw=[\vw_i]^n\in \R^d$ is defined, each with $\vw_i\in \R^{d_i}$ and $d = \sum_{i=1}^n d_i$. Here, $\vell=\{\ell_i:\R^d \rightarrow \R\}_{i=1}^n$ represents the corresponding losses. These losses are assumed to be at least twice differential. Each player $i\in \gN$ is equipped with a policy, parameterized by $\vw_i$, aiming to minimize its loss $\ell_i$. 
\end{definition}

We write the \textit{simultaneous gradient} $\vxi(\vw)$ of a differential game as 
$
    \vxi(\vw) = (\nabla_{\vw_1}\ell_1, \dots, \nabla_{\vw_n}\ell_n)\in \R^d,
$
which is the gradient of the losses with respect to the parameters of the respective players. 
Furthermore, \textit{Hessian matrix} $\vH$ mentioned in a differential game is the Jacobian matrix of the simultaneous gradient.

The \textit{learning dynamics} of differential game often refers to the process of sequential updates over $\vw$. The \textit{learning rule} for each player is defined as the operator,
$
    \label{eq:learning_rule}
    \vw \leftarrow \vw - \gamma \vxi,
$
where $\gamma$ is a step size (learning rate) to determine the distance to move for each update.

\subsection{Gradient Adjustment Optimization}
\label{sec:pre_GAO}
The \textit{stable fixed point}, a criterion initiated from stability theory, exhibits its stability (robustness) to minor perturbations of environments, making it applicable to many real-world scenarios.

\begin{definition}
    A point $\vw^\star$ is a fixed point if $\vxi(\vw^\star)=0$. If $\vH(\vw^\star) \succeq 0$ and $\vH(\vw^\star)$ is invertible, the fixed point $\vw^\star$ is called stable fixed point. If $\vH(\vw^\star) \prec 0$, the point is called unstable.
\end{definition}

A naive idea to steer the dynamic towards convergence at fixed points involves minimizing $\frac{1}{2}\|\vxi(\vw)\|^2$. Assuming the Hessian $\vH(\vw)$ is invertible, then $\nabla (\frac{1}{2}\|\vxi(\vw)\|^2) = \vH^T\vxi =0$ holds true if and only if $\vxi=0$.
However, it could converge to unstable fixed points~\citep{Lars2017Numerics}.
Hence, the consensus optimization method has been proposed, incorporating gradient adjustment ~\citep{Lars2017Numerics}, shown as follows:
$
    \Tilde{\vxi} =  \vxi + \lambda \cdot \nabla \frac{1}{2}\|\vxi(\vw)\|^2
    =\vxi + \lambda \cdot \vH^T \vxi.
    \label{eq:cga}
$
For simplicity, we will refer to the \textit{consensus gradient adjustment} as \textit{CGA}. 
While CGA proves effective in certain specific scenarios, such as two-player zero-sum games, it unfortunately falls short in general games~\citep{balduzzi2018mechanics}.
To address this shortage, \textit{symplectic gradient adjustment} (\textit{SGA})~\citep{balduzzi2018mechanics,Letcher2019Diff} is introduced to find the stable fixed point in general sum games, such that 
$
    \Tilde{\vxi} =  \vxi + \lambda \cdot \vA^T \vxi.
    \label{eq:sga}
$
Herein, $\vA$ represents the antisymmetric component of the generalized Helmholtz decomposition of $\vH$, $\vH(\vw) = \vS (\vw) + \vA(\vw)$, where $\vS (\vw)$ denotes the symmetric component.

\begin{figure}[tp]
    \centering
    \begin{subfigure}[b]{0.30\linewidth}
        \includegraphics[width=\linewidth]{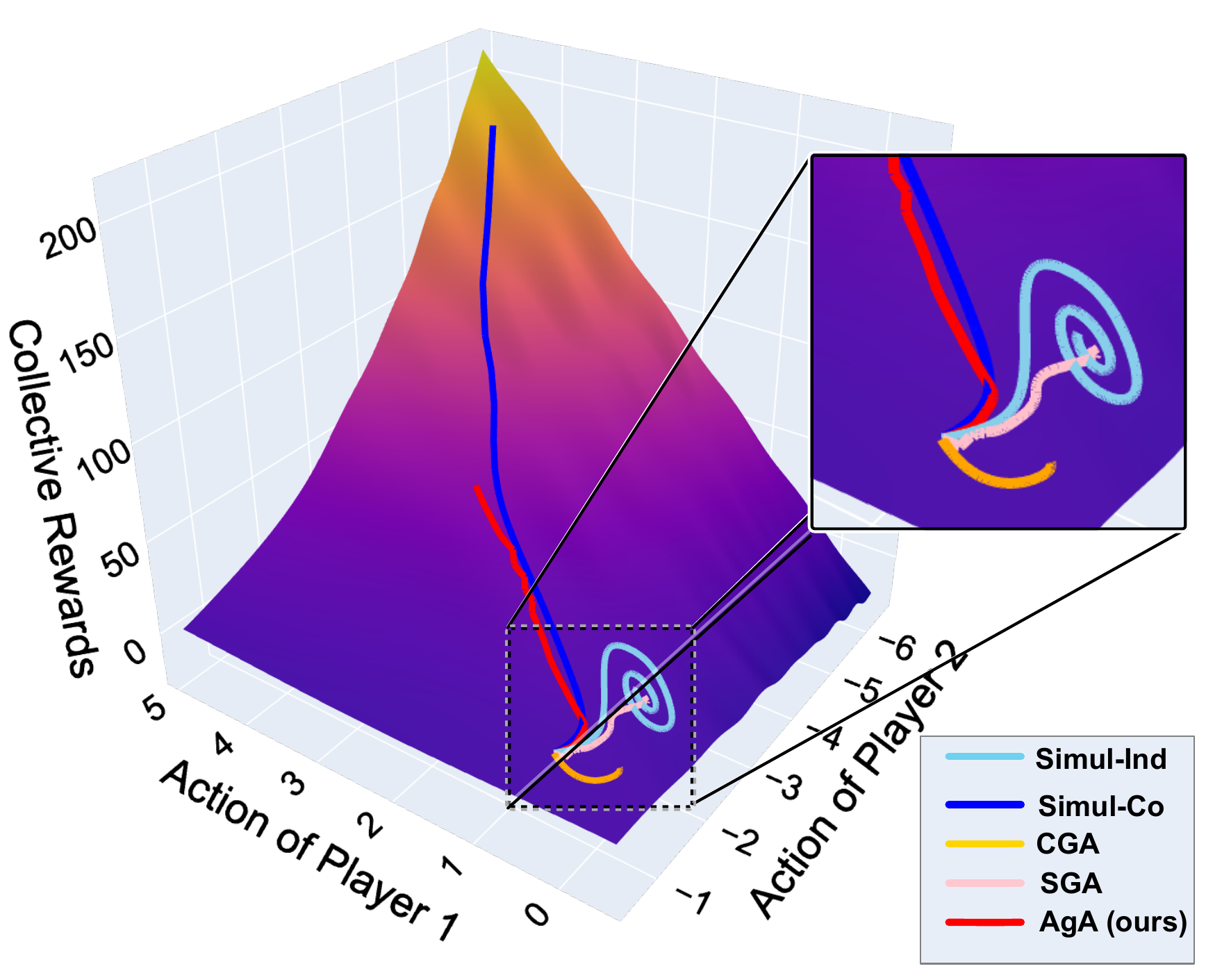}
        \caption{Collective Reward Landscape}
        \label{fig:toy_opt1}
    \end{subfigure}
    \hfill
    \begin{subfigure}[b]{0.28\linewidth}
        \includegraphics[width=\linewidth]{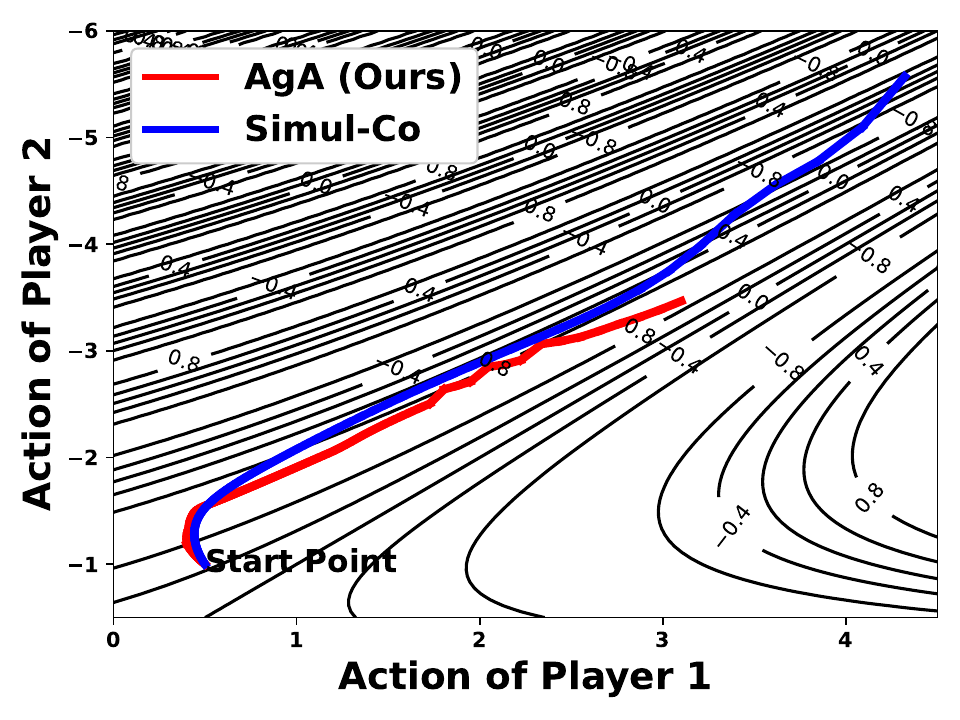}
        \caption{Reward Contour of Player 1}
        \label{fig:toy_opt2}
    \end{subfigure}
    \hfill
    \begin{subfigure}[b]{0.28\linewidth}
        \includegraphics[width=\linewidth]{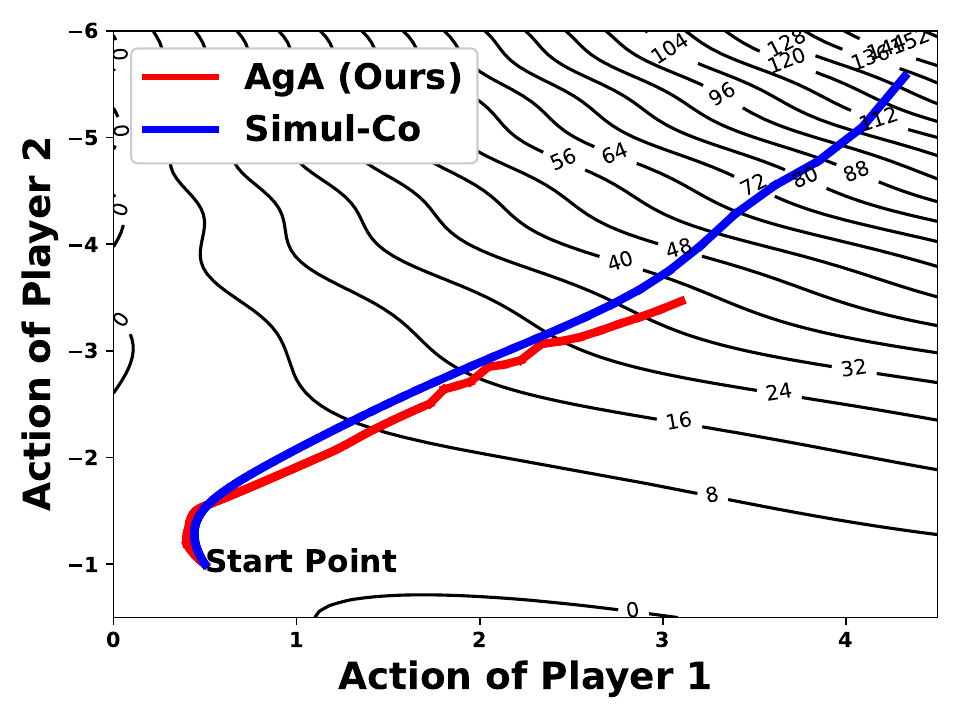}
        \caption{Reward Contour of Player 2}
        \label{fig:toy_opt3}
    \end{subfigure}
    \caption{Trajectories of optimization in a two-player DMG (as delineated in Example~\ref{game_eg}). Fig.\ref{fig:toy_opt1} displays the trajectories over the collective reward landscape - deeper orange equates to higher rewards. 
    Remarkably, only Simul-Co and AgA make successful strides towards the social optimum. 
     Fig.\ref{fig:toy_opt2} and Fig.~\ref{fig:toy_opt3} delineate trajectories on the individual reward contour, \textit{underscoring Simul-Co's neglect for Player 1's interests as it navigates through the crests and troughs of its reward.}
     Conversely, our AgA optimizes along the summit of Player 1's reward while also maximizing the collective reward, \textit{demonstrating successful alignment.}
}
    \label{fig:toy_opt}
\end{figure}

\section{Method}
In this section, we first define differentiable mixed-motive games (DMG) and analyzes the alignment dilemma faced by existing methods, which is described in Section~\ref{sec:DMG}. 
We then propose the Altruistic Gradient Adjustment (AgA) algorithm in Section~\ref{sec:med_aga} to align the individual and collective objectives by modifying the gradient. Finally, a case study demonstrating AgA's effectiveness in a toy two-player DMG  is presented in Section~\ref{sec:med_case}.

\subsection{Differentiable Mixed-motive Game}
\label{sec:DMG}
We first formulate the mixed-motive game as a differentiable game.
Specifically, the \hlight{differentiable mixed-motive game (DMG)} is defined as a tuple $(\gN, \vw, \vell)$, where $\ell_i \in \vell$ is at least twice the differentiable loss function for the agent $i$. 
Differentiable losses exhibit the \textit{mixed motivation property}: minimization of individual losses can result in a conflict between individuals or between individual and collective objectives (e.g., maximizing individual stats and winning the game are often conflict in basketball matches).
In addition to the simultaneous gradient $\vxi(\vw)$ of individual losses with respect to the parameters of the respective players, we use $\vxi_c(\vw)$ to refer to the gradient of the collective loss:
$
    \vxi_c(\vw) = \left(\nabla_{\vw_1} \ell_c, \ldots, \nabla_{\vw_n} \ell_c\right).
$

\textbf{Alignment Dilemma in DMG. } 
Direct optimization of individual losses and collective loss are straightforward ideas to solving DMG problems. 
However, minimizing only the individual loss of each agent is unlikely to achieve a collective optimum~\citep{ssd2017, mckee2020social}. Conversely, optimizing the collective loss might produce better overall outcomes but risks neglecting individual interests~\citep{selfishness2024yali}. Additionally, the local convergence of gradient descent in singular collective functions is not always guaranteed~\citep{balduzzi2018mechanics}. 

Example \ref{game_eg} gives a two-player differentiable mixed-motive game. 
We provide a visual representation of optimization trajectories using a series of methods to investigate the learning dynamics involved in resolving the example, as shown in Fig. \ref{fig:toy_opt}.

\begin{example}
\label{game_eg}
    Consider a two-player DMG with $\ell_1(a_1,a_2) = -sin(a_1 a_2+a_2^2)$ and $\ell_2(a_1,a_2) =  -[cos(1+a_1-(1+a_2)^2)+a_1 a_2^2]$, where $a_i$ represents the action of the player $i$ ($i = 1, 2$), and $a_i \in \mathbb{R}$. 
    The rewards for the two players are the negation of their respective losses.
\end{example}
We first implement simultaneous optimization of individual losses (Simul-Ind) with respect to each parameter, defined by the learning rule $\vw_i = \vw_i - \gamma \vxi_i$, for $i\in \{1,2\}$, where $\vxi_i =\nabla_{\vw_i} \ell_i$. 
Simultaneous optimization of collective loss (Simul-Co) replaces the individual gradient with the collective gradient $\vxi_c$ of collective loss $\ell_c=\ell_1+\ell_2$. 
Furthermore, in order to investigate the learning dynamics of prevalent gradient modification optimization approaches such as SGA and CGA for the two-player differentiable mixed-motive game, we modify the learning rule by integrating the respective adjusted gradient as delineated in Section~\ref{sec:pre_GAO}.

Fig.~\ref{fig:toy_opt1} shows the optimization paths on the collective reward landscapes, with every path starting from the front-bottom of the landscape. 
The collective reward is defined as social welfare, i.e., the sum of individual rewards.
As shown in the figure, Simul-Ind, CGA , and SGA  converge to unstable points or local maxima.
Simul-Co effectively navigates towards the apex of the collective reward landscape as depicted in Fig.~\ref{fig:toy_opt1}. 
However, Simul-Co is ineffective in aligning individual and collective objectives, leading to the neglect of individual interests. 
As depicted in Fig.~\ref{fig:toy_opt2}, the update trajectory navigates through the crests and troughs of Player 1's reward contour. The trajectory suggests a disregard for Player 1's preferences, signifying that the updates are predominantly driven by the overarching collective goal.

\begin{figure}[t]
    \centering
    \begin{subfigure}[b]{0.47\linewidth}
        \includegraphics[width=\linewidth]{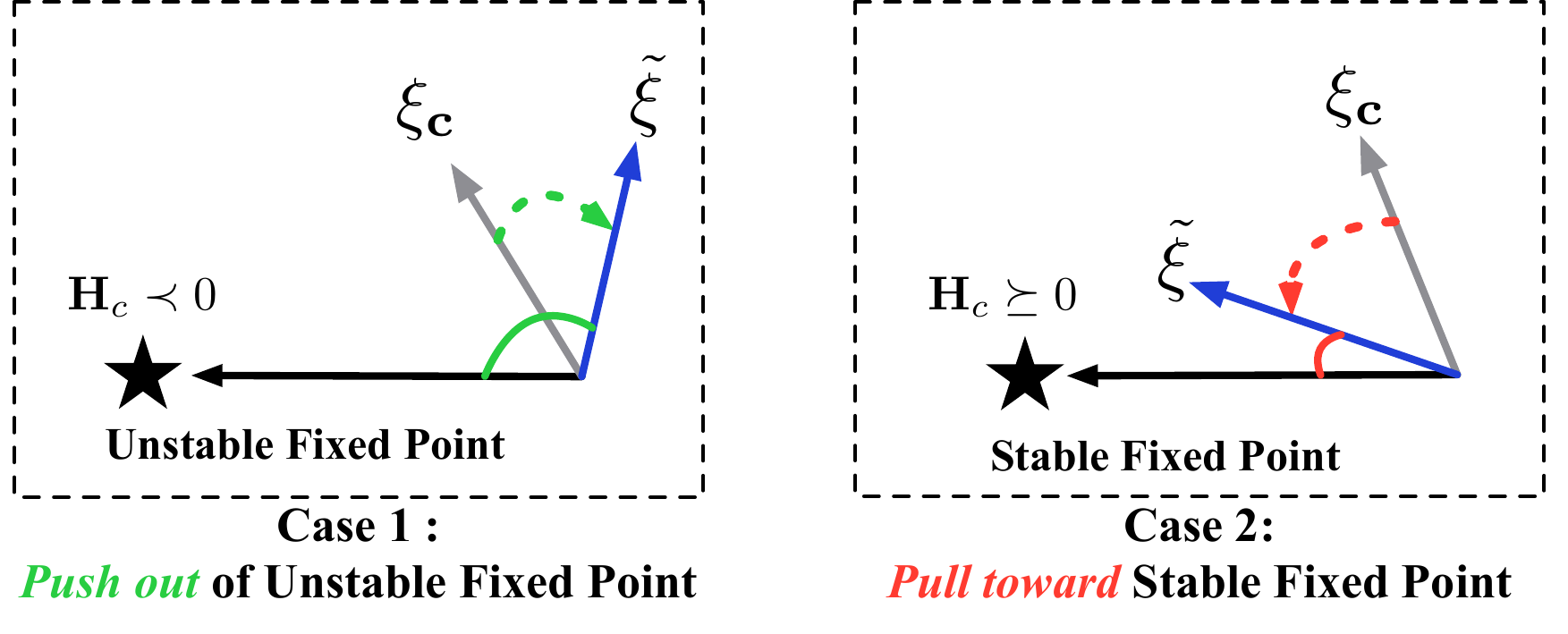}
        \caption{}
        \label{fig:corollary}
    \end{subfigure}
    \begin{subfigure}[b]{0.51\linewidth}
        \includegraphics[width=\linewidth]{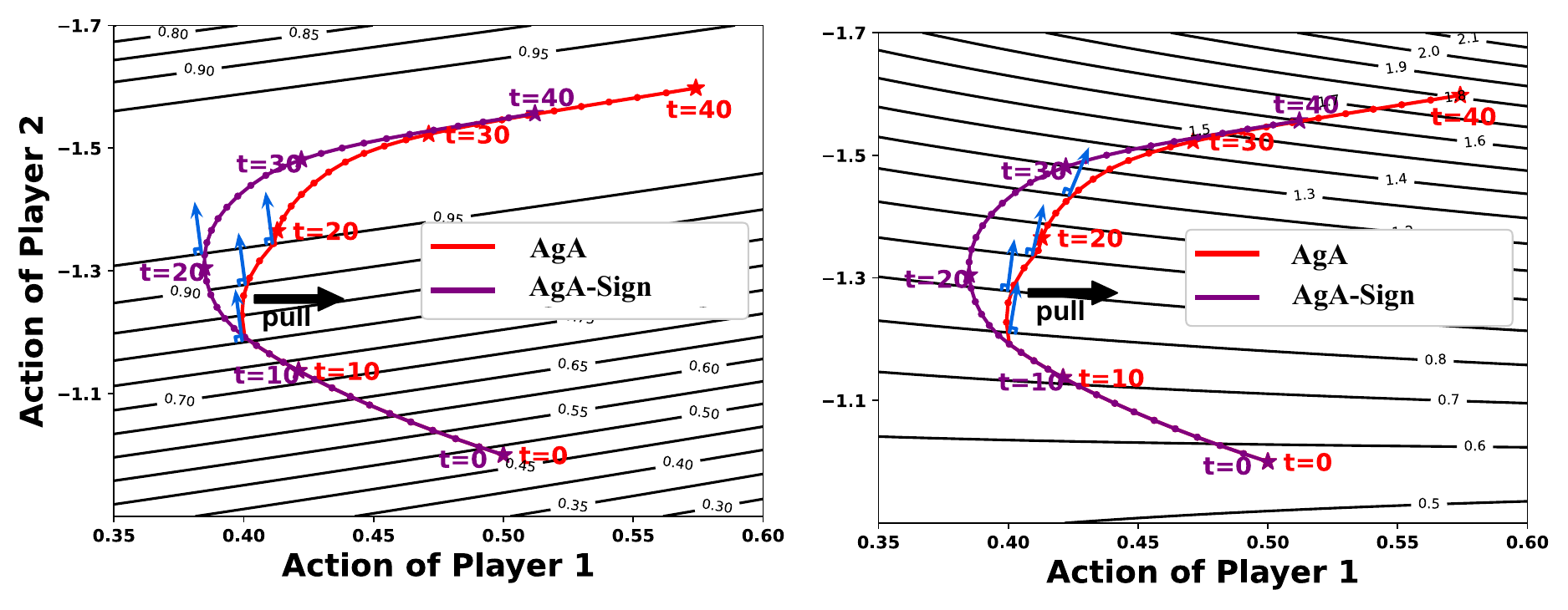}
        \caption{}
         \label{fig:aga_sign}
    \end{subfigure}
    \caption{\textbf{Left figure a: Illustration of Corollary~\ref{thm:aga}.} In case 1, within an unstable fixed point's neighborhoods, an appropriate selection of the $\lambda$ sign push AgA to evade the unstable fixed point and pull towards a stable fixed point in its neighborhoods as shown in case 2. 
    \textbf{Right figure b:
    Alignment Effectiveness of AgA.}
    The comparison between AgA (shown in red) and AgA without sign alignment (AgA-Sign, in purple) trajectories spans 40 steps, marked at every tenth step. Norm gradients are represented with blue arrows. Starting from the 14th step, sign alignment pulls the gradient toward the steepest direction, resulting in AgA reducing the number of steps by approximately 15\% compared to AgA-Sign by the end of the trajectory.
}
\end{figure}

\subsection{Altruistic Gradient Adjustment}
\label{sec:med_aga}

To tackle the challenge of aligning objectives in DMG, we propose the Altruistic Gradient Adjustment (AgA) method. Unlike existing gradient adjustment techniques~\citep{Lars2017Numerics, balduzzi2018mechanics, Letcher2019Diff, Chen2023Learning}, which primarily aim at achieving stable fixed points for individual objectives, our AgA method simultaneously considers both individual and collective objectives when seeking stable fixed points for the collective objective.
Specifically, AgA is defined as follows.

\begin{definition}[Altruistic Gradient Adjustment]
 Altruistic gradient adjustment (AgA) extends the gradient term in the learning dynamic as
    \begin{equation}
    \label{eq:aga}
    \begin{aligned}
            \Tilde{\vxi}  := \vxi_{c} + \lambda \vxi_{adj}
            =\vxi_{c} + \lambda \left( \vxi + \vH_c^T\vxi_c\right),
    \end{aligned}
\end{equation}
where $\lambda\in \R$ is alignment parameter, $\lambda \vxi_{adj}$ is called adjustment term. In $\vxi_{adj}$, $\vxi_{c}$ and $\vH_{c}$ is the gradient vector and Hessian matrix of the game about collective loss.
\end{definition}

Note that the Hessian matrix $\vH_{c}$ is symmetric. For brevity, we denote $\nabla \gH_{c}$ as $\nabla_\vw \left(\frac{1}{2}\|\vxi_{c}\|^2\right)$, which simplifies to $\vH_c^T\vxi_c$. 
In our AgA method, the formula $\xi_c + \lambda(\xi + \vH^T_c \xi_c)$ includes the Hessian matrix, but we do not compute it directly. Instead, we calculate Hessian-vector products $\vH^T_c \xi_c$, which suffice for determining the adjustment gradient and the sign of $\lambda$. This approach reduces computational complexity, enhancing the effectiveness of the AgA method. \textbf{The cost for computing Hessian-vector products $\vH^T_c \xi_c$ is $\mathcal{O}(n)$ for $n$ weights}~\citep{Pearlmutter1994}. Refer to Section~\ref{sec:discussion} for a detailed analysis.


While AgA introduces additional complexity, we theoretically demonstrate that an appropriate choice of the sign of $\lambda$ ensures that AgA pulls the gradient towards a stable fixed point through the adjustment term in the neighborhood of fixed points. Conversely, when dealing with an unstable fixed point, AgA will push the gradient away from these unstable points.

Before introducing the corollary, we first provide some basic notations. The inner product of vectors $\va$ and $\vb$ are denoted by $\langle \va, \vb \rangle$. If the Hessian matrix $\vH$ is non-negative-definite, $\langle \vxi_c, \nabla \gH \rangle \geq 0$ for a non-zero $\vxi_c$. Analogously, if $\vH$ is negative-definite, $\langle \vxi_c, \nabla \gH \rangle < 0$ for a non-zero $\vxi_c$. Lastly, the angle between the two vectors $\va$ and $\vb$ is denoted by $\theta(\va, \vb)$.
Then, we state the corollary as follows:

\begin{restatable}{corollary}{AGACorollary}
\label{thm:aga} 
In the neighborhood of fixed points of the collective objective, AgA will \textbf{pull} the gradient \textbf{toward} stable fixed points, which means $\theta(\Tilde{\vxi}, \nabla\gH_{c}) \leq \theta(\vxi_{c}, \nabla\gH_{c})$, and \textbf{push away} from unstable ones, indicated by $\theta(\Tilde{\vxi}, \nabla\gH_{c}) \geq \theta(\vxi_{c}, \nabla\gH_{c})$, if $\lambda$ satisfies $\lambda \cdot \langle \vxi_{c}, \nabla \gH_{c} \rangle (\langle \vxi, \nabla \gH_{c} \rangle + \|\nabla \gH_{c} \|^2) \geq 0$.
\end{restatable}


The proof is provided in Appendix~\ref{appendix:proofs_corollary}. Fig.~\ref{fig:corollary} presents an illustrative visualization of Corollary~\ref{thm:aga}. 
The figure depicts two scenarios: the neighborhoods of stable and unstable fixed point (denoted by a star), respectively. 
In the neighborhood of an unstable fixed point of the collective objective, as depicted in Case 1, our AaA gradient $\Tilde{\vxi}$ is pushed out of the region, resulting in a faster escape compared to the original collective gradient $\vxi_c$. This behavior is exemplified by the green arrow. 
Conversely, Case 2 illustrates the scenario of a stable fixed point, wherein our AaA gradient $\Tilde{\vxi}$ is pulled towards the stable equilibrium, exemplified by the red arrow.

AgA could be easily integrated into any centralized training with decentralized execution (CTDE) framework within MARL. Detailed statements on the implementation of AgA, including the pseudocode, are provided in Appendix~\ref{sec:algo}.

\subsection{Alignment Effectiveness of AgA: A Toy Experiment} 
\label{sec:med_case}
To address the two-player DMG as outlined in Example~\ref{game_eg}, we incorporate the AgA method into the fundamental gradient descent algorithm.
The optimization trajectories in both the collective reward landscape and the individual player reward contour are represented in Fig.~\ref{fig:toy_opt1}, Fig.~\ref{fig:toy_opt2}, and Fig.~\ref{fig:toy_opt3}. 
In contrast to the objective misalignment exhibited by Simul-Co, AgA successfully aligns the agents' interests, as evidenced in Fig.~\ref{fig:toy_opt2} and Fig~\ref{fig:toy_opt3}. Fig.~\ref{fig:toy_opt2} depicts the trajectory of AgA meticulously carving its course along the summit of the individual reward contour. 
Furthermore, AgA is also shown to improve Player 2 reward (as shown in Fig.~\ref{fig:toy_opt3}), despite a slower convergence rate than Simul-Co. 

In addition, Fig.~\ref{fig:aga_sign} presents a critical comparison between the trajectories of AgA (in red) and AgA without sign alignment (AgA-Sign, depicted in purple), as outlined in Corollary~\ref{thm:aga}. 
This side-by-side comparison covers 40 steps and features highlighted points every tenth step. 
It provides a visual illustration of the effectiveness introduced by sign alignment starting from the 14th step in the trajectories of AgA and AgA without sign alignment. 
Remarkably, norm gradients are represented by blue arrows, indicating the direction of the fastest updates. 
As depicted in the figure, AgA with sign selection is aligned toward the steepest update direction, resulting in a reduction of approximately 15\% in the number of steps compared to AgA-Sign by the end of the trajectory.

\section{Experiments}
\label{sec:exp}

To assess the effectiveness of the proposed AgA algorithm, we perform comprehensive experiments across various environments, ranging from simple to complex, and from small to large scales. 
The initial environment involves a two-player public goods game (see Section~\ref{sec:exp_pub}), followed by sequential social dilemma environments (see Section~\ref{sec:exp_mixed}): Cleanup and Harvest, involving 5 homogeneous players. The final test is conducted in our specially developed selfish-MMM2 environment (see Section~\ref{sec:exp_smac}), which is both more complex and larger in scale, featuring approximately 10 controlled agents of 3 distinct types, competing against 11 adversaries. 
Comparative experiments employ commonly used baseline approaches, such as simultaneous optimization with individual and collective losses (termed Simul-Ind and Siml-Co), CGA~\citep{Lars2017Numerics}, SGA~\citep{balduzzi2018mechanics}, SVO~\citep{SVO}, and the selfishness level approach (termed SL)~\citep{selfishness2024yali}. 

\subsection{Two-Player Public Goods Game}
\label{sec:exp_pub}
A two-player public goods matrix game $\gG$ is widely utilized to study cooperation in social dilemmas. The game involves players $\{1,2\}$ with parameters $\{\vw_1, \vw_2\}$ and payoffs $\{p_1, p_2\}$. The social welfare, denoted as $SW = p_1 + p_2$. Each player, i.e., $i\in \{1,2\}$, contributes an amount $a_i$ within a budgeted range $[0, b]$, and the host evenly distributes these contributions as $\frac{c}{2} (a_1+a_2)$, where $1 < c \leq 2$. Consequently, each player's payoff $p_i(a_1,a_2)$ is estimated as $b-a_i + \frac{c}{2} (a_1+a_2)$.
In our experiments, we set the budget $b$ to $1$ and weight $c$ to $1.5$, with the social optimum of the game at $(1,1)$. 

\begin{table}[t]
    \centering
    \caption{The comparison of the average individual rewards (denoted as $r_1,r_2$), social welfare (denoted as $SW$)\changes{, and equality metric (denoted as $E$)} on two- player public goods game. We show the mean of \changes{value} and 95\% confidence interval utilizing 50 random runs.}
    \resizebox{\linewidth}{!}{
    \begin{tabular}{c|c|c|c|c|c|c|c}
        \toprule
        \textbf{Metrics} 
        & \textbf{Simul-Ind} & \textbf{CGA} & \textbf{SGA} & \textbf{SVO} & \textbf{Simul-Co} & \textbf{SL} &
                \textbf{AgA} \\
         \midrule
         {$\mathbf{r_1}$}
         & 1.133 $\pm$ 0.063
         & 1.156 $\pm$ 0.060
         & 1.175 $\pm$ 0.062
         & 1.104 $\pm$ 0.054
         & 1.433 $\pm$ 0.056 
         & 1.314 $\pm$ 0.062
         & \textbf{1.443$\pm$ 0.042 } \\
         \midrule
$\mathbf{r_2}$ & 1.184 $\pm$ 0.065 & 1.150 $\pm$ 0.057 & 1.137 $\pm$ 0.063& 1.060 $\pm$ 0.051 & 1.381 $\pm$ 0.065& 1.371 $\pm$ 0.057 & \textbf{1.459 $\pm$ 0.041} \\
\midrule
$\mathbf{SW}$ & 2.316$\pm$ 0.039 & 2.306 $\pm$ 0.039& 2.312$\pm$ 0.044  & 2.164 $\pm$ 0.026 & 2.814 $\pm$ 0.033& 2.684 $\pm$ 0.049& \textbf{2.903$\pm$ 0.023} \\
\midrule 
$\mathbf{E}$ & 0.923$\pm$ 0.014 & 0.929$\pm$ 0.012
& 0.924$\pm$ 0.013
& 0.930 $\pm$ 0.011
& 0.941 $\pm$ 0.014
& 0.940 $\pm$ 0.011
& \textbf{0.960 $\pm$ 0.008}
\\
         \bottomrule
    \end{tabular}
    }
    
    \label{tab:Pub_good}
\end{table}

\textbf{Results. } Table~\ref{tab:Pub_good} presents a comparison of individual rewards and the collective outcome, derived from 50 random trials, each limited to 100 steps. The rows $r_1$, $r_2$, $SW$, and $E$ represent the individual rewards for each player, the group’s total welfare (SW), and the equality metric (E), respectively. The equality metric is based on the Gini coefficient ($G$)~\citep{gini}, with $E$ defined as $E = 1 - G = 1 - \frac{2}{n^2 \bar{p}} \sum_{i=1}^n i(p_i - \bar{p})$, where $\bar{p}$ is the mean of the ranked payoff vector $p$, and $n$ is the total number of players. A higher $E$ value indicates greater equality among the players.

The social optimum of the game occurs when all players allocate their entire budget, achieving the highest possible social welfare score of $3$. 
Among all algorithms, AgA achieves the closest social welfare score to this optimum, with a value of 2.90.
Furthermore, AgA exhibits the greatest degree of fairness, with two players receiving nearly identical rewards and the highest equality value, in contrast to the baseline algorithms.
Furthermore, Fig.~\ref{fig:pub-exp} in Appendix~\ref{appendix:exp}  illustrates the action distributions of these algorithms, showing that AgA actions are more tightly clustered around the social optimum actions $(1,1)$. This further indicates that AgA approaches the social optimum more effectively.

\subsection{Sequential Social Dilemma: Cleanup and Harvest}
\label{sec:exp_mixed}
We then verify the AgA algorithm in two widely used sequential social dilemma games with 5 homogeneous agents: Harvest and Cleanup~\citep{Hughes2018Inequity}.
In the Cleanup scenario, agents predominantly gather rewards by harvesting apples in an orchard, where the yield is affected by the river's pollution levels. Neglecting rising pollution levels ultimately ceases apple production, thus setting up a trade-off between individual gains and communal welfare.
In contrast, the Harvest scenario compensates agents for apple collection, with the regeneration of apples being ideally dependent on the proximity to other apples. Here, the communal challenge manifests itself as over-harvesting apples, which diminishes their regrowth rate and, consequently, the aggregate rewards for the group.
Our research assessed AgA against various benchmarks, including Simul-Ind, Simul-Co, CGA, SVO, and SL. Simul-Ind was implemented using the IPPO algorithm~\citep{ippo}, while the remaining benchmarks employed the common parameters of the IPPO framework.
The formula for calculating the collective loss for CGA and AgAs, which aims to improve both the performance of the group and the equitable distribution between agents, is expressed as $\sum_i \left(r_i - \alpha(1-\arctan\left(\frac{\sum_{j,j\neq i} r_j}{r_i}\right)\right)$, where $\alpha$ is a constant. Further details on the experimental setups and the algorithms used are provided in Appendix~\ref{appendix:exp_ssd}.

\begin{table}[t]
\centering
    \caption{\changes{The comparison of the average equality metric (denoted as $E$) on Harvest and Cleanup. We show the mean of equality value and standard deviation utilizing three random runs.}}
    \resizebox{\linewidth}{!}{
    \begin{tabular}{c|c|c|c|c|c|c|c|c|c}
         \toprule
        \multirow{2}{*} 
        {
        \textbf{Envs}} & 
        \multirow{2}{*} 
        {\textbf{Simul-Ind}} & \multirow{2}{*} 
        {\textbf{Simul-Co}} & \multirow{2}{*} 
        {\textbf{SVO}} & \multirow{2}{*} 
        {\textbf{CGA}} & \multirow{2}{*}{\textbf{SL}}&
        \multicolumn{4}{c}{\textbf{AgA}}
        \\
        \cline{7-10}
        & & & & & &  $\lambda = 0.1$ & $\lambda = 1$  & $\lambda = 100$  & $\lambda = 1000$ 
        \\
         \midrule
        \multirow{2}{*}{\textbf{Harvest}}& 0.973 & 0.975 & 0.974 & 0.950 & 0.972 & 0.981 & \textbf{0.988} & 0.982 & 0.980 \\
        & $\pm$ 0.005 & $\pm$ 0.006 & $\pm$ 0.007 & $\pm$ 0.051 & $\pm$ 0.005 & $\pm$ 0.006 & \textbf{$\pm$ 0.003} & $\pm$ 0.012 & $\pm$ 0.006 \\
         \midrule
         \multirow{2}{*}{\textbf{Cleanup}}& 0.841 & 0.948 & 0.902 & 0.903 & 0.946 & 0.940 & 0.956 & \textbf{0.959} & 0.905 \\
& $\pm$ 0.071 & $\pm$ 0.013 & $\pm$ 0.019 & $\pm$ 0.034 & $\pm$ 0.016 & $\pm$ 0.017 & $\pm$ 0.007 & $\pm$ \textbf{0.011} & $\pm$ 0.022 \\
\bottomrule
    \end{tabular}
    }
    \label{tab:new_add}
\end{table}

\textbf{Varying $\lambda$ values.} Fig.~\ref{fig:main_1.1} and Fig.~\ref{fig:main_1.2} explore the social welfare outcomes from AgA under three distinct $\lambda$ values: 0.1, 1, 100, and 1000. Among these, $\lambda=100$ demonstrates superior performance in both the Cleanup and Harvest scenarios. Therefore, in the subsequent experiments involving sign alignment and comparing primary results with baseline models, we will employ $\lambda=100$ as the default parameter.

\textbf{The effectiveness of sign alignment.} In the second row of Fig.~\ref{fig:exp_all}, Fig.~\ref{fig:main_2.1} and Fig.~\ref{fig:main_2.2} illustrate the social welfare comparisons between AgA and AgA without sign alignment (AgA-Sign) during their training phases. Referencing Corollary~\ref{thm:aga}, near a fixed point, sign alignment helps direct the gradient towards stable fixed points and away from unstable ones. During the latter half of the training sessions shown in Fig.~\ref{fig:main_2.1} and Fig.~\ref{fig:main_2.2}, it is observed that the social welfare ($SW$) for AgA-Sign does not improve to the same extent as it does for AgA, highlighting the effectiveness of sign alignment.

\textbf{Baseline Comparison.} As illustrated in Fig.~\ref{fig:main_3.1} and Fig.~\ref{fig:main_3.2}, AgA demonstrates a notable improvement in social welfare over Simul-Ind, Simul-Co, CGA, SVO, and SL. Particularly, within the Cleanup scenario, AgA recorded an average social welfare of 105.15, marking approximately a 56\% increase over the runner-up method SL. In the Harvest scenario, AgA shows an average improvement of 33.55 in social welfare over the next best method SVO. Table~\ref{tab:new_add} compares the mean and standard deviation of the equality metric achieved by different methods in the Harvest and Cleanup environments. A value closer to 1 signifies more equal rewards among agents. As shown in the table, AgA outperforms the baseline methods, demonstrating its ability to effectively balance the interests of all team members.

\begin{figure}[t]
    \centering
    \begin{subfigure}[b]{0.32\linewidth}
        \includegraphics[width=\linewidth]{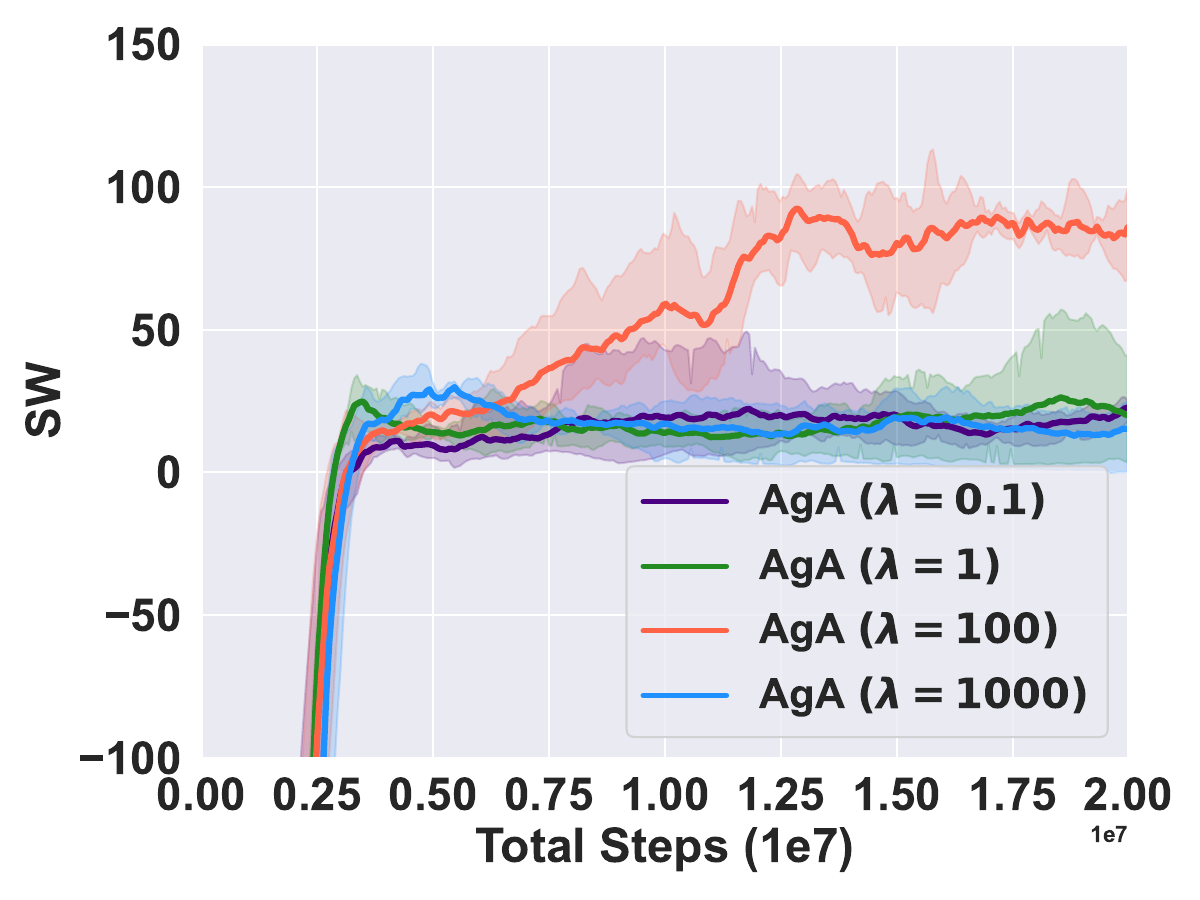}
        \caption{Cleanup}
        \label{fig:main_1.1}
    \end{subfigure}
    \hfill
    \begin{subfigure}[b]{0.32\linewidth}
        \includegraphics[width=\linewidth]{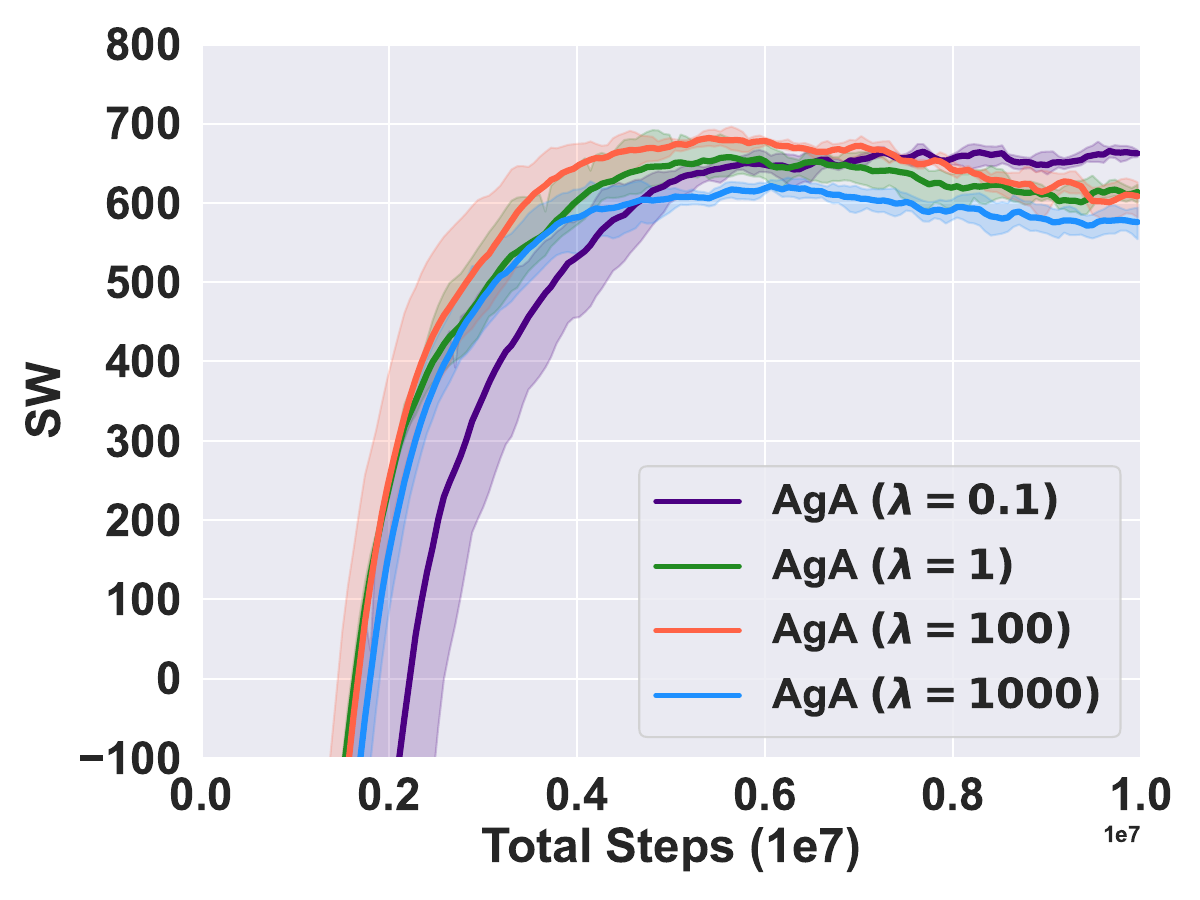}
        \caption{Harvest
        }
        \label{fig:main_1.2}
    \end{subfigure}
    \hfill
    \begin{subfigure}[b]{0.32\linewidth}
        \includegraphics[width=\linewidth]{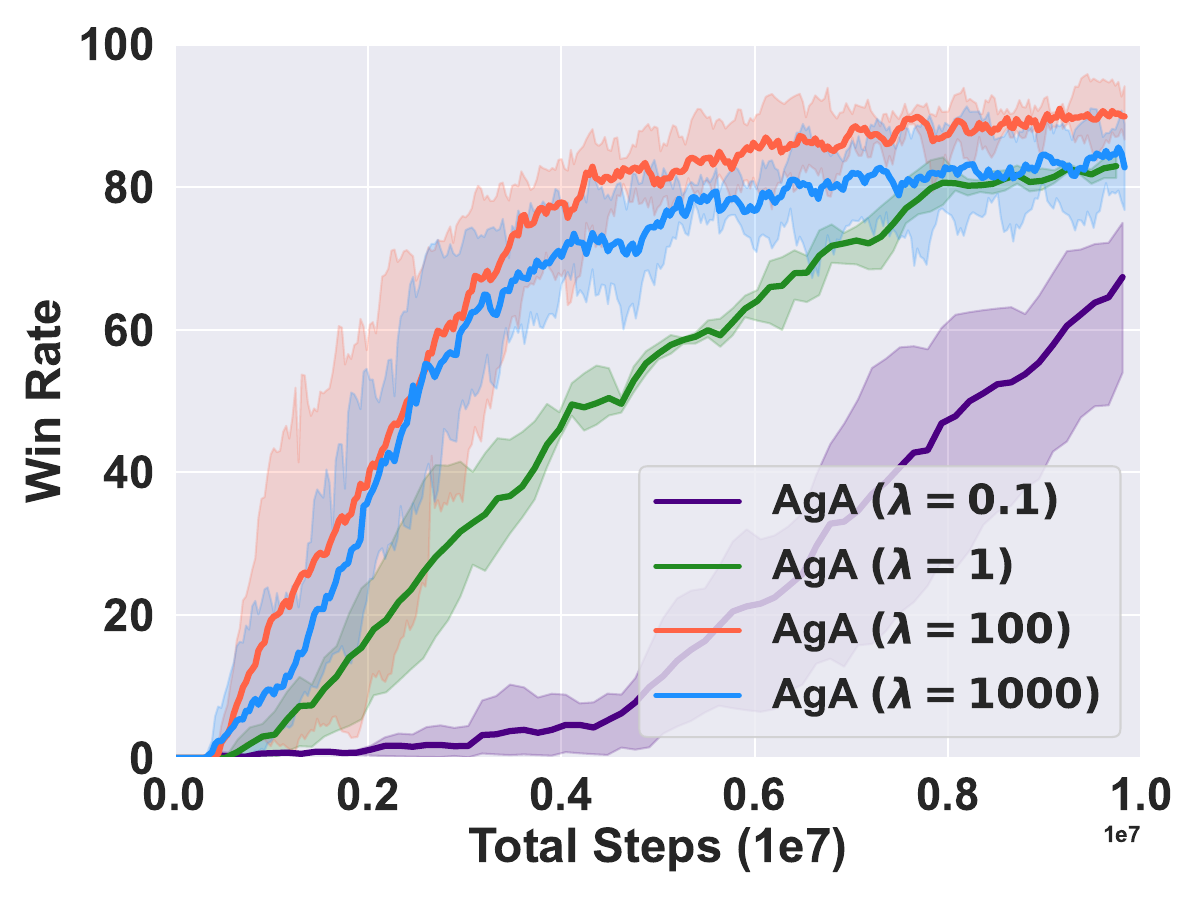}
        \caption{Selfish-MMM2
        }
        \label{fig:main_1.3}
    \end{subfigure}
    
    \begin{subfigure}[b]{0.32\linewidth}
        \includegraphics[width=\linewidth]{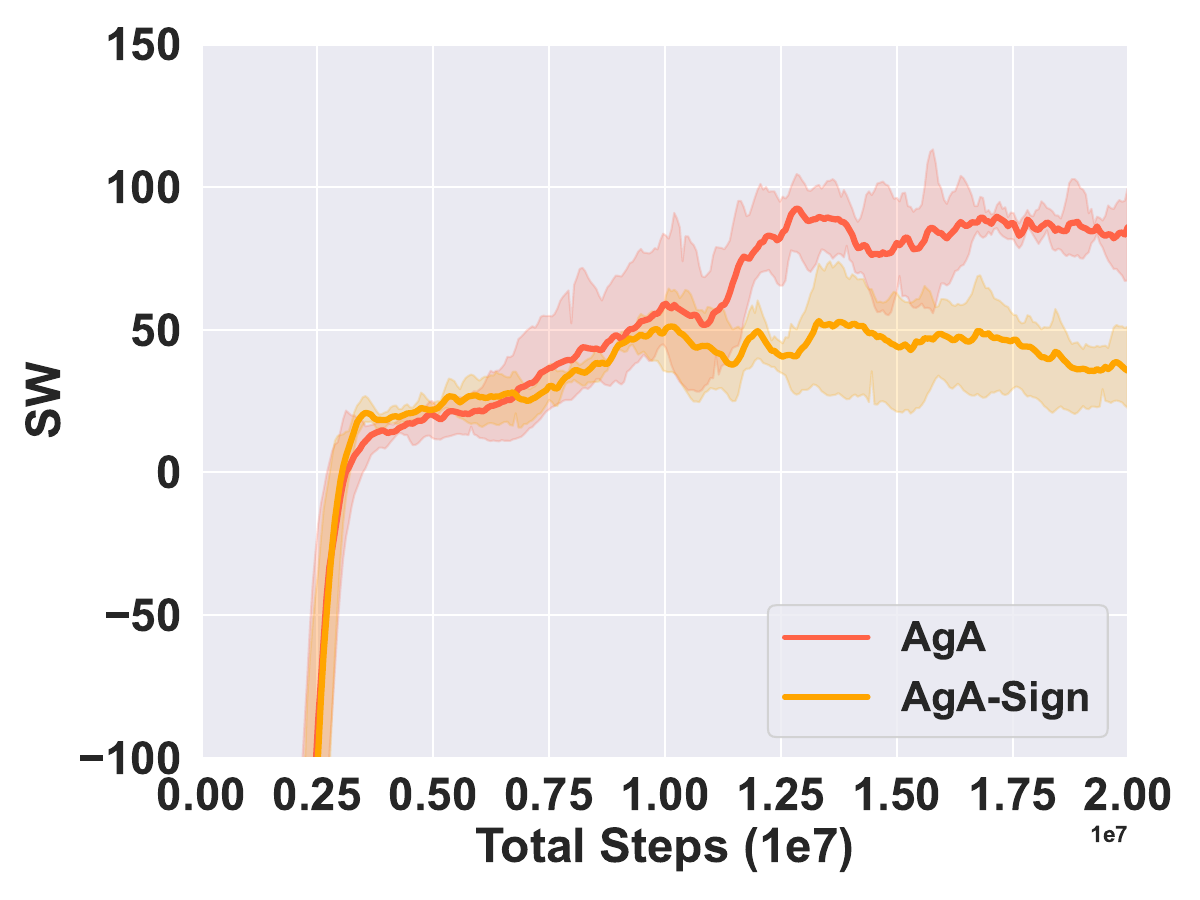}
        \caption{Cleanup}        
        \label{fig:main_2.1}
    \end{subfigure}
    \hfill
    \begin{subfigure}[b]{0.32\linewidth}
        \includegraphics[width=\linewidth]{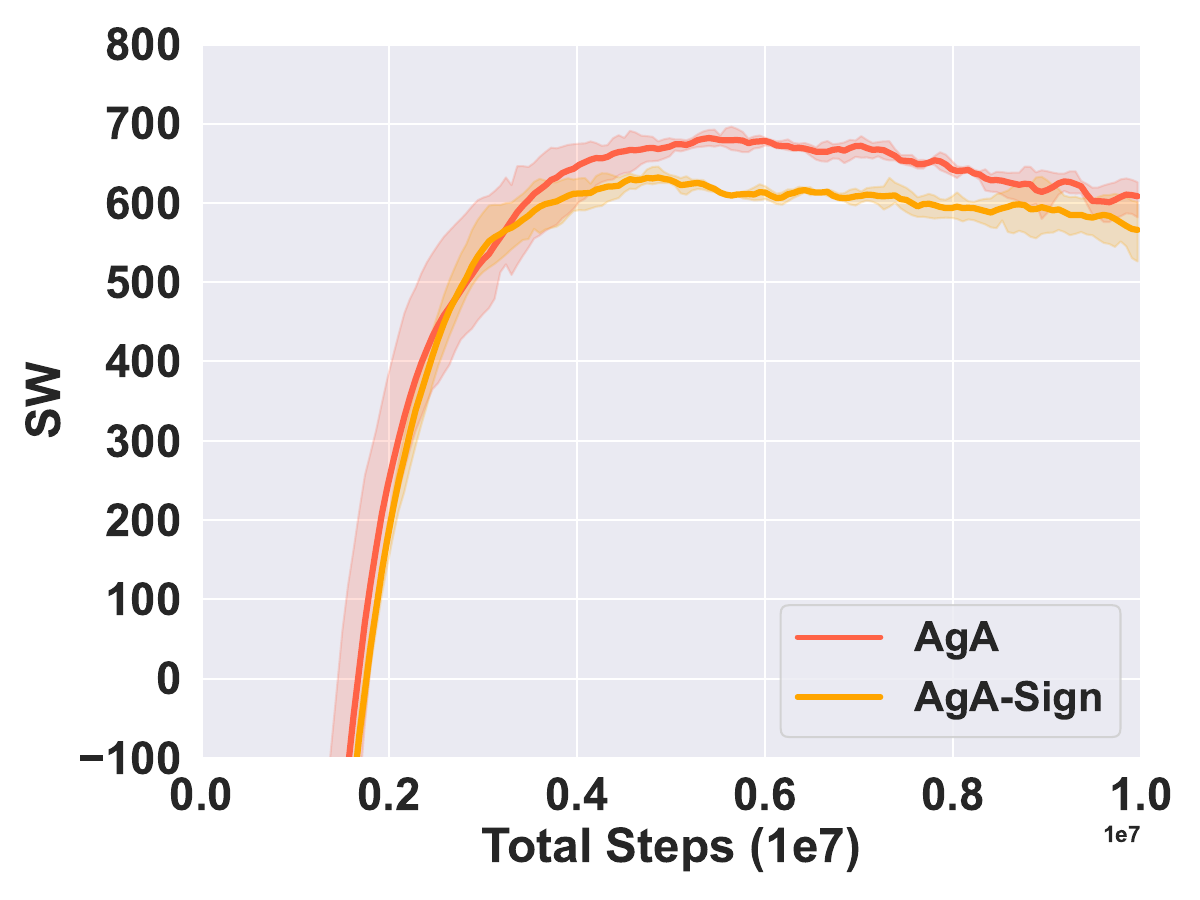}
        \caption{
        Harvest
        }
        \label{fig:main_2.2}
    \end{subfigure}
    \hfill
    \begin{subfigure}[b]{0.32\linewidth}
        \includegraphics[width=\linewidth]{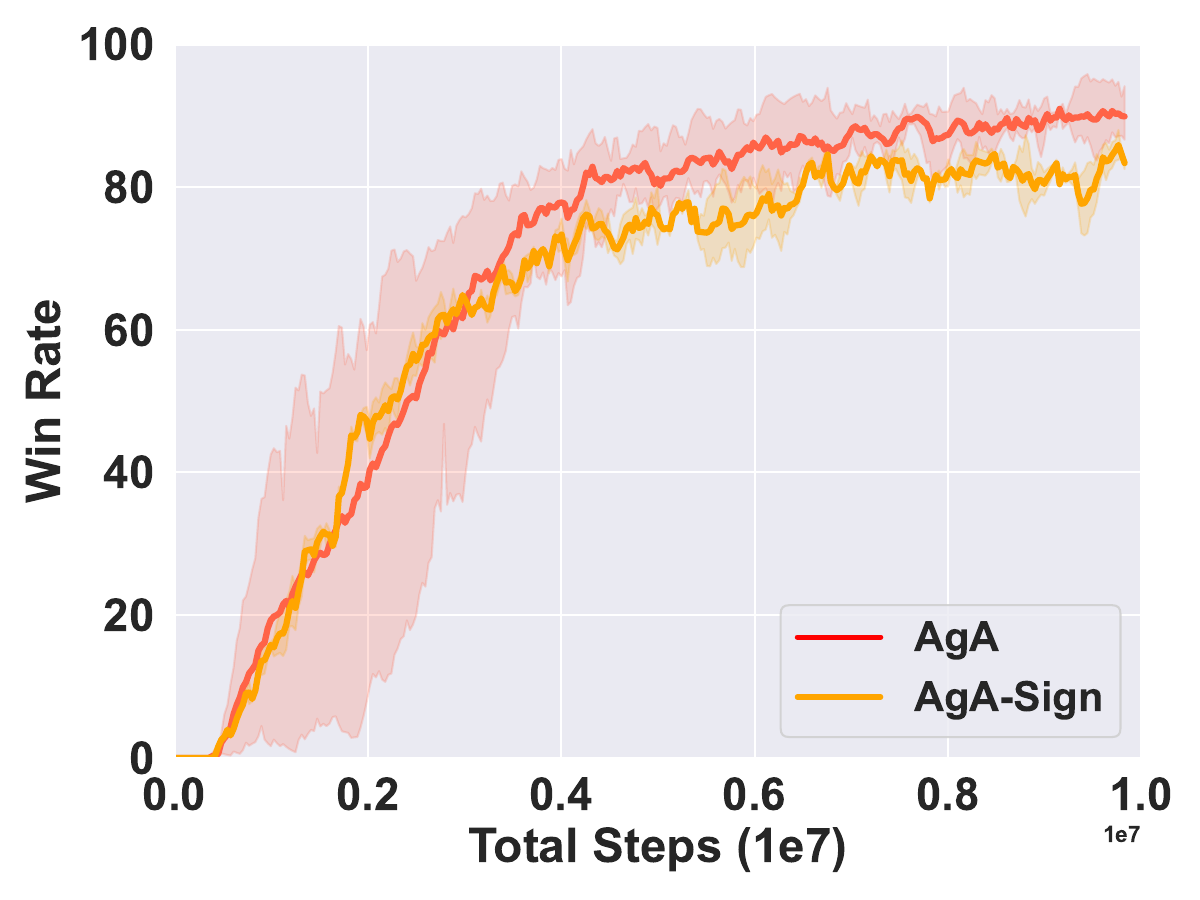}
        \caption{
        Selfish-MMM2
        }
        \label{fig:main_2.3}
    \end{subfigure}
    
    \begin{subfigure}[b]{0.32\linewidth}
        \includegraphics[width=\linewidth]{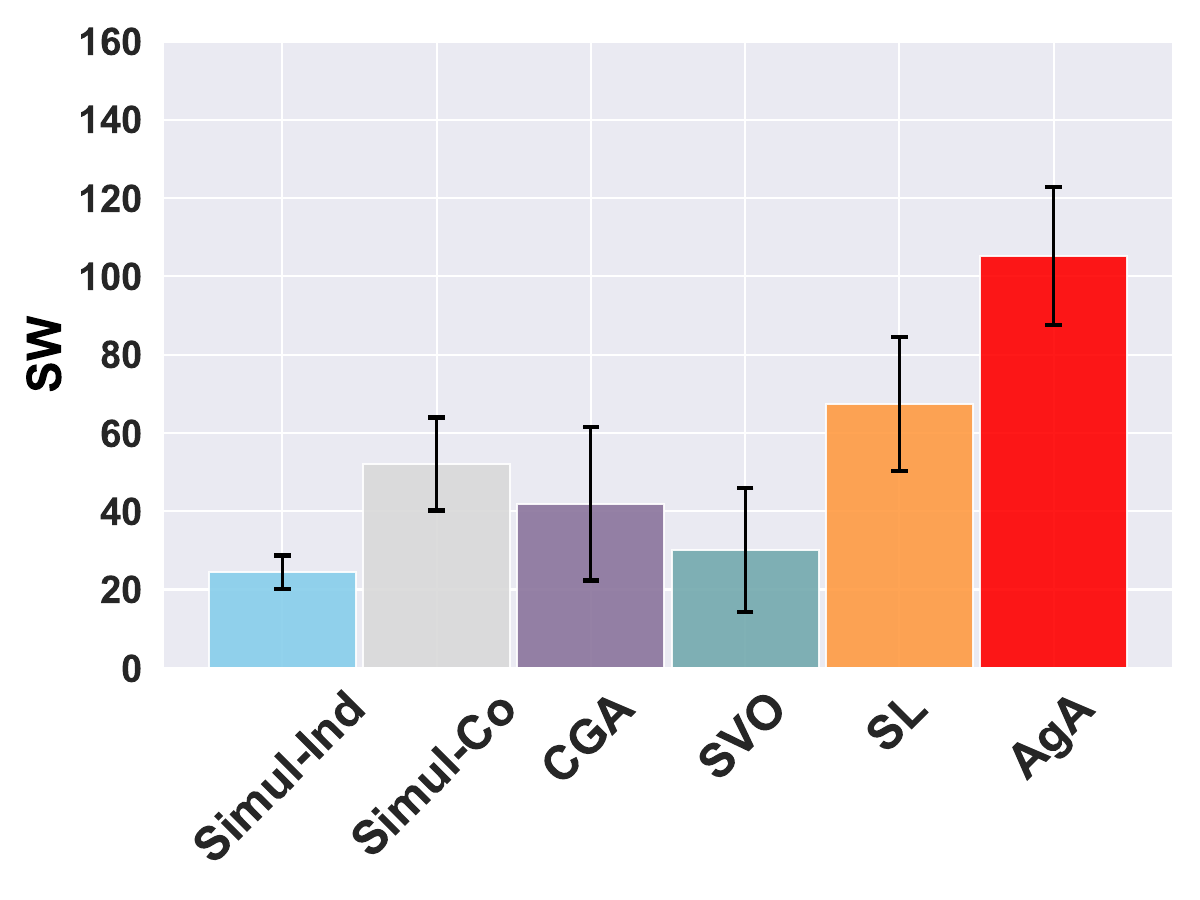}
        \caption{Cleanup
        }
        \label{fig:main_3.1}
    \end{subfigure}
    \begin{subfigure}[b]{0.32\linewidth}
        \includegraphics[width=\linewidth]{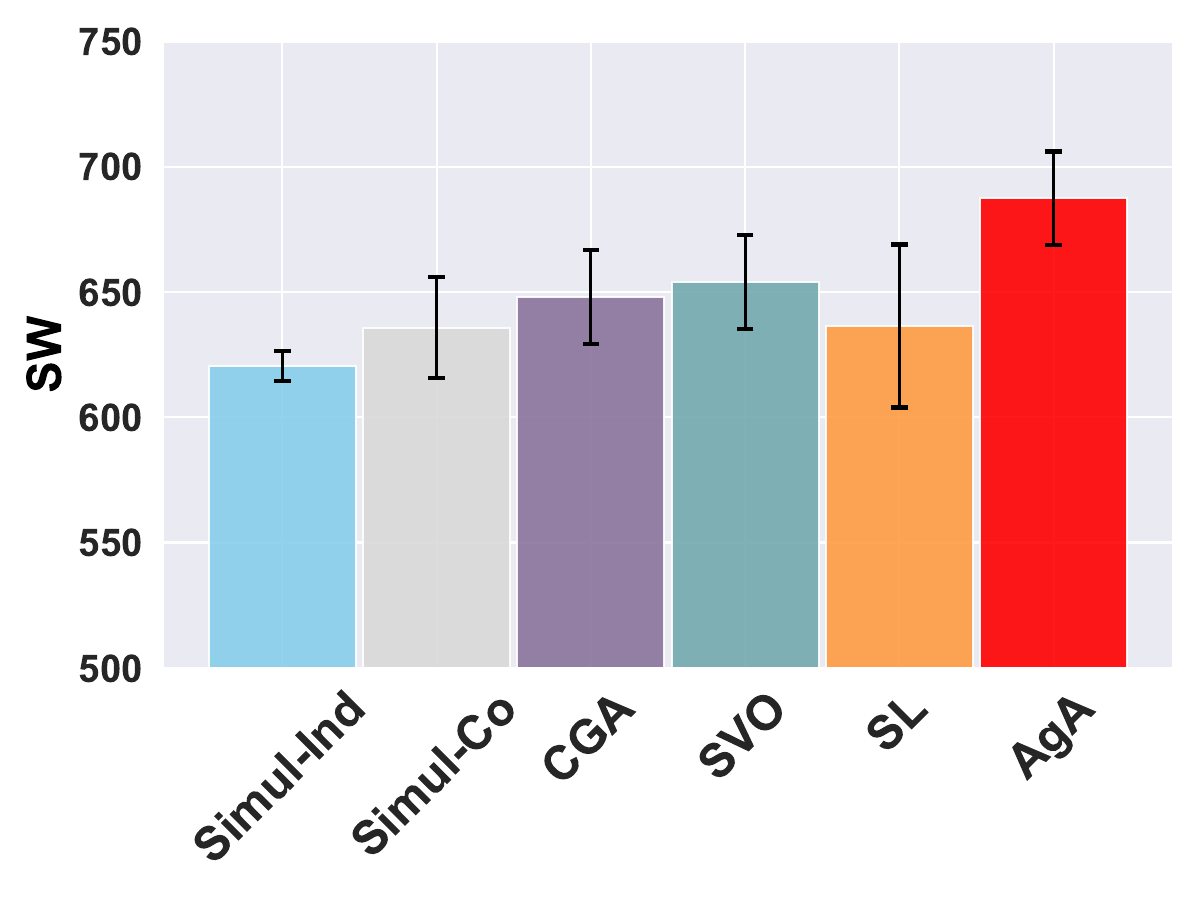}
        \caption{Harvest
        }
        \label{fig:main_3.2}
    \end{subfigure}
    \begin{subfigure}[b]{0.32\linewidth}
        \includegraphics[width=\linewidth]{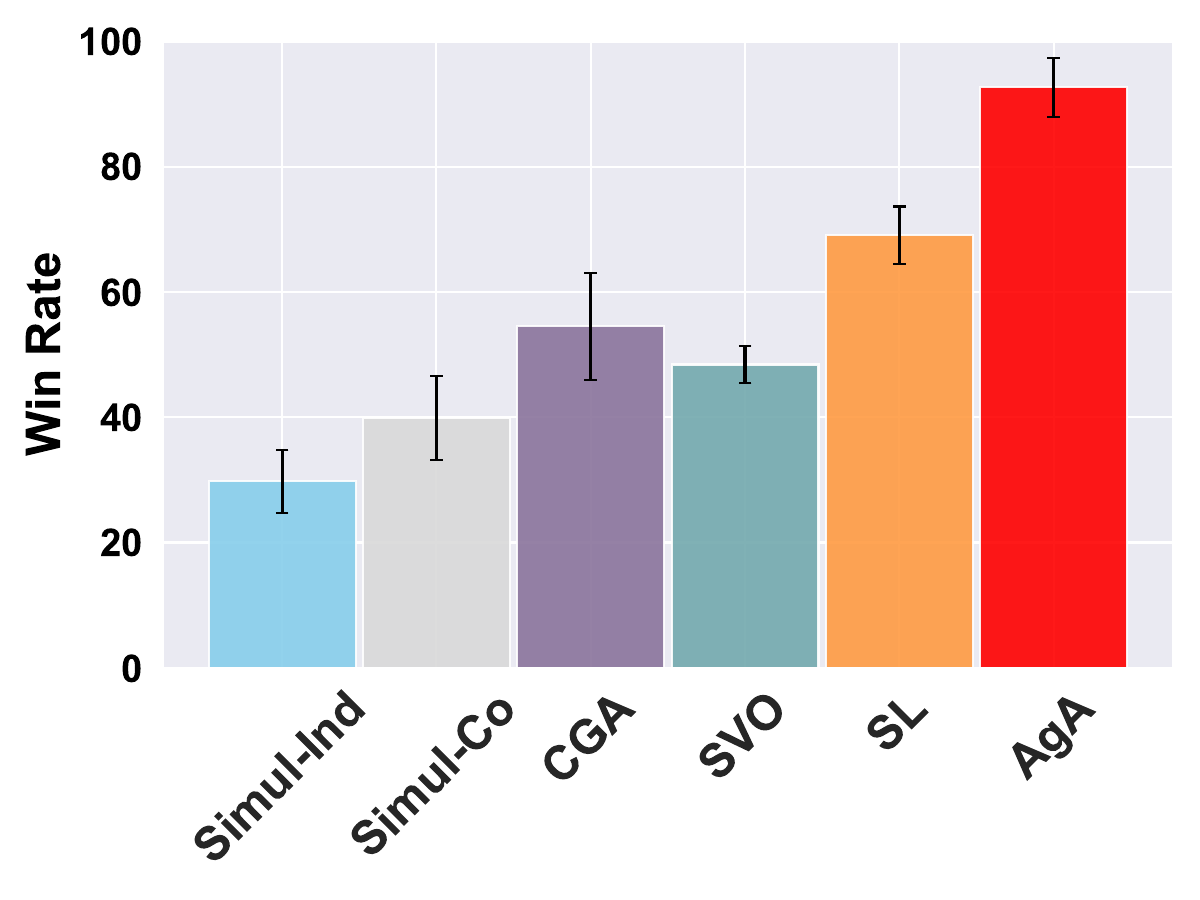}
        \caption{Selfish-MMM2
        }
        \label{fig:main_3.3}
    \end{subfigure}
    \caption{
    \textbf{The first row} displays results comparing different values of the alignment parameter $\lambda$ across three environments: Cleanup and Harvest (measurement of social welfare, SW) and Selfish-MMM2 (focusing on win rate). \textbf{The second row} examines the performance differences between the proposed AgA and AgA without sign alignment (AgA-Sign) on the three testbeds. 
    The bold lines indicate the mean social welfare calculated in three seeds, while the surrounding shaded areas represent the 95\% confidence interval.
    \textbf{The third row} compares the AgA method with baseline approaches on these testbeds. Each bar represents the mean collective results of each method and the error bars indicate the 95\% confidence interval.
}
    \label{fig:exp_all}
\end{figure}

\subsection{Selfish-MMM2}
\label{sec:exp_smac}
While sequential social dilemma games are frequently used to study mixed-motive problems, they are relatively simplistic compared to other experimental environments in MARL, particularly when considering real-world applications. 
To address this gap, we introduce a sophisticated mixed-motive testbed called Selfish-MMM2, which is adapted from the MMM2 map in the StarCraft II game~\citep{smac}. 
Unlike the standard SMAC framework, which employs a shared reward structure, our methodology assigns individual objectives to each agent, thereby enhancing the complexity of mixed motives. 
Specifically, we have redesigned the reward systems so that each agent's personal gain is directly tied to the damage they inflict on adversaries, diverging from traditional approaches where collective damage is evenly distributed among all agents. Additionally, penalties for agent elimination emphasize the importance of self-preservation and highlight agents' tendencies towards self-interested behavior.
In comparison to widely utilized multi-agent environments such as Cleanup~\citep{Hughes2018Inequity}, Harvest~\citep{Hughes2018Inequity}, and the 10-player Prisoner’s Dilemma (PD) for large-scale evaluations~\citep{gemp2020d3c}, Selfish-MMM2 is distinguished by two notable features:
1) Large-Scale Agents: Selfish-MMM2 manages the interactions of 10 heterogeneous agents, categorized into three types: seven Marines, two Marauders, and one Medivac, confronting 11 enemy units. This configuration is in stark contrast to Cleanup and Harvest, which each utilize five homogeneous agents.
2) Extensive Action Space: The action space in Selfish-MMM2, sized at $18^{10}$, is vastly larger than that of Cleanup and Harvest, which are limited to $9^5$ and $8^5$, respectively. It also surpasses the action space of the 10-player PD, which is $2^{10}$.

In evaluating the Selfish-MMM2 environment, we use the average win rate against the built-in AI bot in SMAC to measure the collective performance.
The experimental setup employs the IPPO algorithm~\citep{ippo} for the Simul-Ind algorithm, while the MAPPO~\citep{mappo} is utilized for the Simul-Co, SVO, CGA, SL and AgA algorithms.
More details on the Selfish-MMM2 environment and implementation specifics are available in the Appendix~\ref{appendix:exp_smac2}.

\textbf{Varying $\lambda$ Values and the effectiveness of Sign Alignment:} Fig.~\ref{fig:main_1.3} presents the average win rates in various settings of the align parameter $\lambda$ in the AgA scenario.
The findings indicate that increased $\lambda$ values are associated with faster convergence in the training period. Among these, a $\lambda$ value of 100 demonstrates a higher overall effectiveness compared to other values tested.
Based on these findings, we have set a default value $\lambda$ of 100 for subsequent experiments. Furthermore, Fig.~\ref{fig:main_2.3} in the second row illustrates a comparable conclusion in sequential social dilemma games. 
During the initial training phase, the convergence rates of AgA and AgA-Sign are nearly identical. However, as training progresses, AgA exhibits convergence at points of superior performance, highlighting the critical role of sign alignment in achieving convergence near fixed points.

\textbf{Baseline Comparison.} As depicted in Fig.~\ref{fig:main_3.3}, the bar chart displays a comparison of the average win rates along with 95\% confidence intervals for AgA versus other baseline approaches. Our AgA approach attains the highest team performance, securing a 92.64\% win rate with the built-in bot, which is 23.61 percent more than the next best win rate achieved by the SL method.

\section{Discussion}
\label{sec:discussion}
\textbf{Computational Cost of AgA.} The altruistic gradient adjustment is expressed as $\Tilde{\vxi}  = \vxi_{c} + \lambda (\vxi + \vH_c^T\vxi_c)$, where the key computational cost comes from the Hessian-vector product, $\vH_c^T\vxi_c$. The cost for computing Hessian-vector products, $\vH_c^T\xi_c$, is $\mathcal{O}(n)$ for $n$ weights~\citep{Pearlmutter1994}. This introduces added complexity compared to standard methods in mixed-motive MARL, making AgA’s running time generally more than twice as long as gradient-based methods. Table~\ref{tab:running_time} provides a comparison of the average running time between AgA and baseline methods in a two-player public goods game, including total duration, timesteps, time per step, and the time ratio relative to Simul-Ind. Simul-Ind, Simul-Co, and SL are standard gradient-based methods, while CGA and SGA modify gradients. Our results show that AgA takes 2-3 times longer per step but remains the most efficient, requiring only 1389 steps over 50 runs, compared to around 4000 steps for the baselines.

\begin{table}
    \centering
    \caption{Comparison of the average running time between baseline methods and AgA in the two-player public goods game, including total duration, timesteps, time per step, and the time per step ratio relative to Simul-Ind.}
    \begin{tabular}{ccccccc}
    \toprule
        \textbf{Metrics} & \textbf{Simul-Ind} & \textbf{Simul-Co} & \textbf{SL} & \textbf{CGA} & \textbf{SGA} & \textbf{AgA}\\
        \midrule
        \textbf{Duration (ms)} & 1165.79 & 910.15 &	1149.97 & 3041.84 & 3007.77 & 1034.69 \\
        \textbf{Steps} & 4272 & 3252 & 3887 & 4478 & 4179 & 1389 \\
        \textbf{Step Time (ms)} & 0.27 & 0.28 & 0.30 & 0.68 & 0.72 & 0.74 \\
        \textbf{Ratio} & 1.00 & 1.04& 1.11 & 2.52 & 2.67 & 2.74 \\
        \bottomrule
    \end{tabular}
    \label{tab:running_time}
\end{table}

\textbf{Limitations.} Despite conducting a series of experiments to verify the proposed AgA algorithm, including using a new large-scale mixed-motive cooperation testbed, these environments remain somewhat removed from real-world applications. Although we strive to align the objectives of both individuals and the collective, our primary focus in this paper is on converging towards stable fixed points of the collective objective. For future research, we plan to delve deeper into the interaction of individual objectives in mixed-motive games, with an increased emphasis on understanding their dynamics in real-world mixed-motive cooperation scenarios.

\section{Conclusion}
In this paper, we propose AgA, an optimization method specifically designed to align individual and collective objectives through gradient adjustments in mixed-motive cooperation scenarios. To achieve this, we first model the mixed-motive game as a differentiable game, offering a powerful tool for analyzing learning dynamics at both individual and collective levels. 
Furthermore, we theoretically demonstrate that AgA can effectively attract gradients to stable fixed points and support our claims with empirical evidence. To evaluate the effectiveness of AgA in complex and large-scale scenarios, we introduce a new mixed-motive environment called Selfish-MMM2, which features heterogeneous large-scale agents, a larger action space, and increased task complexity. Comprehensive experiments, ranging from simple public goods games to Harvest, Cleanup, and Selfish-MMM2, show AgA's superior performance, consistently outperforming existing baselines across multiple evaluation metrics. These results validate our theoretical claims and highlight the effectiveness of AgA in achieving cooperative behavior in multi-agent systems.
\newpage
\section*{Acknowledgement}
This work is partially supported by National Key R\&D Program of China (2022ZD0114804) and National Natural Science Foundation of China (62106141). Jianhong Wang is fully supported by the Engineering and Physical Sciences Research Council [Grant Ref: EP/Y028732/1]. The authors also thank Shuqing Shi for his kind assistance and advice.

\bibliography{example_paper}
\bibliographystyle{abbrvnat}

\clearpage
\appendix

\newpage
\section{Broader Impacts}
\label{appendix:impact}
This paper presents a novel approach in the field of multi-agent learning, aimed at enhancing collaboration among agents with mixed motivations. The proposed framework addresses a fundamental challenge in machine learning: enabling agents with diverse objectives to cooperate effectively to achieve collective goals.
The work described in this paper holds significant potential for various societal applications. By facilitating effective cooperation among agents with different motivations, our approach can be applied to domains such as video games, autonomous systems, smart cities, and decentralized resource allocation. These advancements can lead to numerous societal benefits, including improved efficiency in transportation systems, optimized resource allocation in energy grids, and enhanced decision-making in complex multi-agent environments.
Furthermore, our findings contribute to the broader field of machine learning by offering insights into managing mixed-motivation scenarios. This work paves the way for future research and innovation, driving progress in multi-agent learning and its applications across various domains.
However, while our method was initially developed to address fundamental challenges in decision-making within machine learning contexts, it is conceivable that our proposed objective alignment techniques could be repurposed for harmful purposes, such as attacking existing cooperative systems.

\section{Proof of Corollary~\ref{thm:aga}}
\label{appendix:proofs_corollary}
\AGACorollary*
\begin{proof}
Prior to embarking on the elucidation of the corollary's proof, it is essential to first introduce relevant notations and the foundational concepts.
Take note of the notation $\theta(\va, \vb)$, which symbolizes the angular measure between two vectors.
Furthermore, we write $\theta_\lambda(\Tilde{\vxi}, \vw)$ to represent the angular measure between AgA gradient $\Tilde{\vxi}$ and a reference update direction $\vw$.
For simplicity, we use $\Tilde{\vxi}=\vu+\lambda \vv$ to denote the AgA gradient $\Tilde{\vxi} 
            =\vxi_{c} + \lambda \left( \vxi + \vH_c^T\vxi_c\right), $as defined in Eq.~\ref{eq:aga}.
Specifically, $\vu$ denotes collective gradient $\vxi_{c}$ and $\vv$ denotes $\left( \vxi + \vH_c^T\vxi_c\right)$.

We then extend the definition of infinitesimal alignment~\citep{balduzzi2018mechanics} to our proposed AgA gradient.
Given a third reference vector $\vw$, the \hlight{infinitesimal alignment} for AgA gradient $\Tilde{\vxi}$ with $\vw$ is defined as 
    \begin{equation}
    \label{eq:app_ia}
        align(\Tilde{\vxi}, \vw):=\frac{d}{d \lambda}\{cos^2\theta_\lambda\}_{|\lambda=0}.
    \end{equation}

Intuitively, we may presume that vectors $\vu$ and $\vw$ are directionally aligned, given that $\vu^T\vw > 0$. In this scenario, $align>0$ implies that vector $\vv$ is drawing $\vu$ closer towards the reference vector $\vw$. Conversely, vector $\vv$ is propelling $\vu$ away from $\vw$. 
Similar arguments hold for $align < 0$: $\vv$ pushes $\vu$ away from the reference vector $\vw$ when $\vu$ and $\vw$ share the same directional orientation. Conversely, $\vv$ pulls $\vu$ towards the reference vector $\vw$ when the vectors $\vu$ and $\vw$ are not oriented in the same direction.

The ensuing lemma provides a simplified method for determining the sign of $align$, circumventing the need for generating an exact solution.

\begin{lemma}[Sign of $align$.]
\label{lemma:sign}
    Given AgA gradient $\Tilde{\vxi}$ and reference vector $\nabla \gH_c$,
    the sign of infinitesimal alignment $align$ could be calculated by
    $$
        sign\left(align(\Tilde{\vxi},\nabla \gH_{c})\right) =
        sign\left(\langle  \vxi, \nabla \gH_{c}\rangle ( \langle  \vxi_{c}, \nabla \gH_{c}\rangle + \|\nabla \gH_{c}\|^2)\right).
    $$
\end{lemma}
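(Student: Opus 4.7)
The plan is to parametrize $\Tilde{\vxi}(\lambda) = \vxi_c + \lambda(\vxi + \nabla \gH_c)$ and compute $\frac{d}{d\lambda}\cos^2 \theta_\lambda|_{\lambda=0}$, where $\theta_\lambda$ is the angle between $\Tilde{\vxi}(\lambda)$ and $\nabla \gH_c$. Writing $a(\lambda) = \langle \Tilde{\vxi}(\lambda), \nabla \gH_c\rangle$ and $b(\lambda) = \|\Tilde{\vxi}(\lambda)\|^2$, the definition reads $\cos^2 \theta_\lambda = a(\lambda)^2/(b(\lambda)\|\nabla \gH_c\|^2)$, a rational function of $\lambda$ to which the quotient rule applies directly. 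The first step would be to record the zeroth- and first-order data at $\lambda=0$: $a(0) = \langle \vxi_c, \nabla \gH_c\rangle$, $a'(0) = \langle \vxi, \nabla \gH_c\rangle + \|\nabla \gH_c\|^2$, $b(0) = \|\vxi_c\|^2$, and $b'(0) = 2\langle \vxi_c,\,\vxi + \nabla \gH_c\rangle$.

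Applying the quotient rule yields
$$\frac{d}{d\lambda}\cos^2 \theta_\lambda\bigg|_{\lambda=0} = \frac{2 a(0)\bigl(a'(0)\,b(0) - a(0)\,\langle \vxi_c,\,\vxi + \nabla \gH_c\rangle\bigr)}{b(0)^2\,\|\nabla \gH_c\|^2},$$
whose denominator is strictly positive and so contributes nothing to the sign. The key simplifying step, in the spirit of the Balduzzi--Letcher treatment of perpendicular-style adjustments in SGA, is to argue that the $b'(0)$ correction term does not flip the sign in the regime of interest and can be absorbed, so that $\operatorname{sign}(align)$ factors cleanly as $\operatorname{sign}(a(0))\cdot\operatorname{sign}(a'(0))$. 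Expanding $a'(0)$ then exhibits a product of an outer factor controlling alignment of the unperturbed AgA gradient with $\nabla \gH_c$ and an inner factor $\langle \vxi, \nabla \gH_c\rangle + \|\nabla \gH_c\|^2$ measuring how the adjustment reshapes that inner product.

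The main obstacle will be reconciling the placement of $\vxi$ versus $\vxi_c$ in the displayed product. The natural parametrization above puts $\langle \vxi_c, \nabla \gH_c\rangle$ in the outer factor, in agreement with line~5 of Algorithm~\ref{alg: AgA}, whereas the lemma as stated puts $\langle \vxi, \nabla \gH_c\rangle$ there. I would address this by either (i) treating the displayed formula as a typographical rearrangement of the algorithmic expression and making the exchange $\vxi \leftrightarrow \vxi_c$ explicit in the final line of the proof, or (ii) re-parametrizing the interpolation so that $\vxi$ occupies the $\lambda = 0$ endpoint and rerunning the same quotient-rule computation. Either resolution preserves the substantive content, namely a closed-form sign for $align$ expressed through the three scalars $\langle \vxi, \nabla \gH_c\rangle$, $\langle \vxi_c, \nabla \gH_c\rangle$ and $\|\nabla \gH_c\|^2$, which is then invoked when choosing the sign of $\lambda$ in Algorithm~\ref{alg: AgA} to realize the attraction/repulsion dichotomy of Corollary~\ref{thm:aga}.
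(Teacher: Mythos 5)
Your approach is the same one the paper uses: define $align$ as $\tfrac{d}{d\lambda}\cos^2\theta_\lambda\big|_{\lambda=0}$ along $\Tilde{\vxi}(\lambda)=\vxi_{c}+\lambda(\vxi+\nabla\gH_{c})$ and read off the sign of the first-order coefficient; your data $a(0)=\langle\vxi_{c},\nabla\gH_{c}\rangle$, $a'(0)=\langle\vxi,\nabla\gH_{c}\rangle+\|\nabla\gH_{c}\|^2$, $b(0)=\|\vxi_{c}\|^2$, $b'(0)=2\langle\vxi_{c},\vxi+\nabla\gH_{c}\rangle$ are all correct. You are also right about the $\vxi\leftrightarrow\vxi_{c}$ placement: the paper's own derivation ends with $\sign\bigl(\langle\vxi_{c},\nabla\gH_{c}\rangle(\langle\vxi,\nabla\gH_{c}\rangle+\|\nabla\gH_{c}\|^2)\bigr)$, which is what line~5 of Algorithm~\ref{alg: AgA} and the hypothesis of Corollary~\ref{thm:aga} actually use, so the displayed lemma is a typo and your resolution (i) is the right one; re-parametrizing as in (ii) would compute a different quantity.

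The step you defer --- ``argue that the $b'(0)$ correction does not flip the sign and can be absorbed'' --- is a genuine gap, and it is worth being explicit that the paper shares it: its proof differentiates only the numerator of $\cos^2\theta_\lambda$ and silently treats $\|\Tilde{\vxi}\|^2$ as constant in $\lambda$. The true first-order coefficient is proportional to $a(0)\bigl(a'(0)b(0)-a(0)\langle\vxi_{c},\vxi+\nabla\gH_{c}\rangle\bigr)$, and the bracket need not share the sign of $a'(0)$. The analogy with SGA does not rescue this: there the adjustment $\vA^{\top}\vxi$ is orthogonal to the base gradient by antisymmetry of $\vA$, so $b'(0)=0$ and dropping the denominator's derivative is legitimate, whereas here $\langle\vxi_{c},\vxi+\nabla\gH_{c}\rangle$ is generically nonzero; moreover, near a fixed point of $\ell_c$ both $a'(0)b(0)$ and $a(0)b'(0)$ scale like $\|\vxi_{c}\|^3$, so no asymptotic argument absorbs the correction either. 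Concretely, writing the bracket as $\|\vxi_{c}\|^2\bigl\langle (\vxi+\nabla\gH_{c})-\tfrac{\langle\vxi_{c},\vxi+\nabla\gH_{c}\rangle}{\|\vxi_{c}\|^2}\vxi_{c},\,\nabla\gH_{c}\bigr\rangle$ shows that only the component of the adjustment orthogonal to $\vxi_{c}$ can rotate the gradient, so the correct inner factor is this projected inner product rather than $\langle\vxi,\nabla\gH_{c}\rangle+\|\nabla\gH_{c}\|^2$. To complete your proof you must either restate the lemma with this corrected factor or add a hypothesis forcing the two to agree in sign; as written, neither your proposal nor the paper's argument establishes the displayed identity.
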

\begin{proof}
    By the definition of $\theta_\lambda$, we could directly get:
    \begin{subequations} 
    \begin{align}
    &cos^2\theta_\lambda  \\
    = &\left(\frac{\langle  \Tilde{\vxi},\nabla \gH_{c}\rangle }{\|\Tilde{\vxi}\|^2\cdot \|\nabla \gH_{c}\|^2}\right)^2 \\
    = & \left(\frac{\langle  \vxi_{c} + \lambda(\vxi+\nabla \gH_{c}),\nabla \gH_{c}\rangle }{\|\Tilde{\vxi}\|^2\cdot \|\nabla \gH_{c}\|^2}\right)^2 \\
    = & \frac{
    \langle  \vxi_{c}, \nabla \gH_{c}\rangle ^2 + 2\lambda\langle  \vxi_{c}, \nabla \gH_{c}\rangle \langle  \vxi + \nabla \gH_{c}, \nabla \gH_{c}\rangle  + \gO(\lambda^2)
    }{
    \left(\|\Tilde{\vxi}\|^2\cdot \|\nabla \gH_{c}\|^2\right)^2
    }\\
    = & \frac{
    \langle  \vxi_{c}, \nabla \gH_{c}\rangle ^2 + 2\lambda\langle  \vxi, \nabla \gH_{c}\rangle \langle  \vxi_{c}, \nabla \gH_{c}\rangle + 2\lambda \|\nabla \gH_{c}\|^2 \langle  \vxi_{c}, \nabla \gH_{c}\rangle  + \gO(\lambda^2)
    }{
\left(\|\Tilde{\vxi}\|^2\cdot \|\nabla \gH_{c}\|^2\right)^2
    }\\
    = & \frac{
    \langle  \vxi_{c}, \nabla \gH_{c}\rangle ^2 + 2\lambda\langle  \vxi_{c}, \nabla \gH_{c}\rangle (\|\nabla \gH_{c}\|^2 + \langle  \vxi, \nabla \gH_{c}\rangle ) + \gO(\lambda^2)
    }{
\left(\|\Tilde{\vxi}\|^2\cdot \|\nabla \gH_{c}\|^2\right)^2
    }.
    \end{align}        
    \end{subequations}

    In accordance with the established definition of infinitesimal alignment, it becomes feasible to further compute the sign of $align$ by taking the derivative of $\lambda$ with respect to $cos^2\theta_\lambda$:
    \begin{subequations}
        \begin{align}
&sign\left(align(\Tilde{\vxi},\nabla \gH_{c})\right)\\
        =&
        sign\left(\frac{d}{d\lambda}cos^2\theta_\lambda\right)\label{lem1:1}\\
        =&sign\left(
        \langle  \vxi_{c}, \nabla \gH_{c}\rangle \left( \langle  \vxi, \nabla \gH_{c}\rangle + \|\nabla \gH_{c}\|^2 \right) 
        \right)\label{lem1:2}.
        \end{align}
    \end{subequations}
\end{proof}

Lemma~\ref{lemma:sign} empowers us to calculate the sign of infinitesimal alignment relying on components that are readily computable.

Let us observe $\langle  \vxi_{c}, \nabla\gH_{c}\rangle =\vxi_{c}^T\vH_{c}\vxi_{c}$ for symmetric Hessian matrix $\vH_{c}$. 
Under the assumption of $\vxi_{c} \neq 0$, we could derive:
\begin{equation}
\label{eq:app_Hc}
        \left\{
    \begin{aligned}
        &if\ \vH_{c}\succeq 0, then\ \langle  \vxi_{c}, \nabla \gH_{c}\rangle  \geq 0;\\
        &if\ \vH_{c} \prec 0, then\ \langle  \vxi_{c}, \nabla \gH_{c}\rangle  < 0.
    \end{aligned}
        \right.
\end{equation}
As demonstrated by Eq.~\ref{eq:app_Hc}, when situated in the neighborhood of a stable fixed point (that is to say, $\vH_c \succeq 0$), we observe $\langle \vxi_{c}, \nabla \gH_{c}\rangle \geq 0$. In converse circumstances, $\langle \vxi_{c}, \nabla \gH_{c}\rangle < 0$ occurs. 
Following this, we shall proceed to delve into two specific scenarios: stability and instability of the fixed point.

\hlight{1) Case 1: In a neighborhood of a stable fixed point.} 
   If we are in a neighborhood of a stable fixed point then  $\langle  \vxi_{c}, \nabla\gH_{c}\rangle \geq 0$. 
   This indicates that both vectors $\vxi_{c}$ and $\nabla\gH_{c}$ point in the same direction, i.e., $\theta(\vxi_{c}, \nabla\gH_{c}) \leq \pi/2$. Referring to Lemma~\ref{lemma:sign}, the sign of $align(\Tilde{\vxi}, \nabla \gH_c)$ is the same as the sign of $\langle \vxi, \nabla \gH_{c}\rangle + \|\nabla \gH_{c}\|^2 $ if $\langle  \vxi_{c}, \nabla\gH_{c}\rangle \geq 0$. 
   Consequently, we have $sign(\lambda)=sign(\langle \vxi, \nabla \gH_{c}\rangle + \|\nabla \gH_{c}\|^2)=sign(align)$ matching the claim of our corollary.
   
Let us now conjecture the scenarios of $sign(align)$, namely, $sign(align)\geq 0$ or $sign(align)<0$. When $sign(align)\geq 0$, it follows that $sign(\lambda)\geq 0$. Consequently, vector $\vv$ will draw $\vu$ towards $\vw$. Subsequently, when $sign(align)< 0$, we observe that $sign(\lambda)< 0$. A negative $\lambda$ reverses the mechanism such that $\vv$ pushes $\vu$ away from $\vw$, aligning with the discussion following Eq.~\ref{eq:app_ia} that assumes a positive $\lambda$. Thus, regardless of the sign of $align$, if we set $sign(\lambda) = sign(\langle \vxi, \nabla \gH_{c}\rangle + \|\nabla \gH_{c}\|^2)$, vector $\vv$ ensures $\vu$ is drawn towards $\vw$.

From now, we prove that in the neighborhood of a stable fixed point, if we let $sign(\lambda)$ satisfy $\lambda\cdot \langle \vxi_{c}, \nabla \gH_{c}\rangle ( \langle \vxi, \nabla \gH_{c}\rangle + \|\nabla \gH_{c}\|^2) \geq 0$, the AgA gradient $\Tilde{\vxi}$ will be pulled more closer to $\nabla \gH_c$ than $\vxi_c$.

\hlight{2) Case 2: In a neighborhood of a unstable fixed point.}  The proofing approach for the unstable case exhibits similarity to Case 1. 

Therefore, we prove that if we let the sign of $\lambda$ follows the condition $\lambda\cdot \langle \vxi_{c}, \nabla \gH_{c}\rangle ( \langle \vxi, \nabla \gH_{c}\rangle + \|\nabla \gH_{c}\|^2) \geq 0$, the optimization process using the AgA method exhibit following attributes.
In areas close to fixed points, 1) if the point is stable, the AgA gradient will be pulled toward this point, which means that $\theta(\Tilde{\vxi}, \nabla\gH_{c}) \leq \theta(\vxi_{c}, \nabla\gH_{c})$; 2) if the point represents unstable equilibria, the AgA gradient will be pushed out of the point, indicating that $\theta(\Tilde{\vxi}, \nabla\gH_{c}) \geq \theta(\vxi_{c}, \nabla\gH_{c})$.
An illustrative example of the corollary is provided in Fig.~\ref{fig:corollary}.

\end{proof}

\section{Implementation of MARL Algorithms with AgA} 
\label{sec:algo}
The concrete implementation of altruistic gradient adjustment is elaborated in Algorithm~\ref{alg: AgA}. 
AgA can be easily incorporated into any centralized training decentralized execution (CTDE) framework of MARL.
In a typical CTDE framework, each agent, denoted as $i$, strives to learn a policy that employs local observations to prompt a distribution over personal actions. 
The unique characteristic during the centralized learning phase is that agents gain supplementary information from their peers that is inaccessible during the execution stage. 
Furthermore, a centralized critic often operates in tandem in CTDE settings, leveraging shared information and evaluating individual agent policies’ potency.
To incorporate our proposed AgA algorithm into CTDE frameworks, supplementary reward exchanges between agents are required, offering the essential inputs for Algorithm~\ref{alg: AgA}. 
The individual losses are calculated from individual rewards conforming to established MARL algorithm framework, such as IPPO~\citep{ippo} or MAPPO~\citep{mappo}. 
Here, the collective loss $\ell_c$ is calculated according to individual rewards; for example, it could be a sum of individual rewards or an elementary transformation thereof.
Furthermore, the AgA algorithm requires a magnitude value of the alignment parameter $\lambda$, which should be a positive entity. 
The sign of $\lambda$ is then calculated in line 5 of Algorithm~\ref{alg: AgA} and is based on each gradient component calculated in lines 2-4. 
Subsequently, the algorithm produces the adjusted gradient $\Tilde{\vxi}$.
$\Tilde{\vxi}$ is then distributed across each corresponding parameter, and optimization is carried out using a suitable optimizer such as Adam optimizer~\citep{adam}.

\begin{algorithm}[bp]
\caption{Altruistic Gradient Adjustment (AgA)}\label{alg: AgA}
\begin{algorithmic}[1]

\STATE{\textbf{Input:} individual losses $\bm{\ell}=[\ell_1,\cdots, \ell_n]$, collective loss $\ell_{c}$, parameters $\vw=[\vw_1,\cdots, \vw_n]$}, magnitude of alignment parameter $\lambda$

\STATE $\vxi \leftarrow [grad(\ell_i, \vw_i)\ for\ (\ell_i, \vw_i)\in (\bm{\ell}, \vw)]$

\STATE $\vxi_{c} \leftarrow [grad(\ell_c, \vw_i)\ for\ \vw_i\in \vw]$

\STATE $\nabla \gH_{c}\leftarrow [grad(\frac{1}{2}\|\vxi_{c}\|^2, \vw_i)\ for\ \vw_i\in \vw]$

\STATE $\lambda \leftarrow \lambda \times sign\left(\frac{1}{d}\langle  \vxi_{c}, \nabla \gH_{c}\rangle \left( \langle  \vxi, \nabla \gH_{c}\rangle + \|\nabla \gH_{c}\|^2\right)\right)$

\STATE \textbf{Output:} {$\Tilde{\vxi}=\vxi_{c}+\lambda (\vxi + \nabla \gH_{c})$} 

\COMMENT{Plug into any gradient descent optimizer}
    
\end{algorithmic}
\end{algorithm}

\section{Additional Experiment Results}
\label{appendix:exp}

\begin{figure}[!t]
    \centering
    \includegraphics[width=0.5\linewidth]{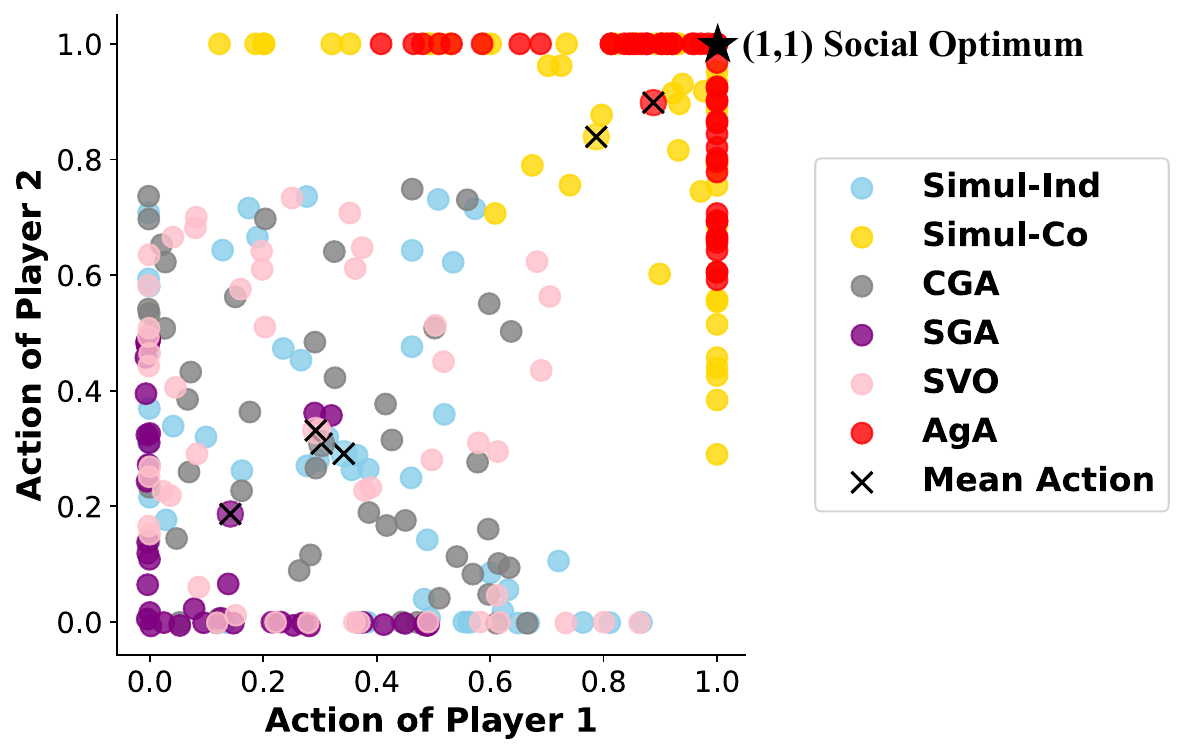}
    \caption{
    The scatter of actions in a two-player public goods game achieved through different optimization methods. Each circle represents the position attained within a maximum of 100 steps, with the color indicating the corresponding method. 
    The 'X' mark represents the mean actions of 50 random runs. With the exception of Simul-Co, the baseline methods converge towards the Nash equilibrium (0,0). Notably, while both AgA and Simul-Co display altruistic behavior, the actions of AgA are more tightly clustered around the (1,1) point compared to Simul-Co.
} 
    \label{fig:pub-exp}
\end{figure}

\subsection{Result Visualization of Two-player Public Goods Game}

Fig.~\ref{fig:pub-exp} depicts the actions produced by various methods in a two-player public game. 
The action generated by each method is represented by a circle, which is color-coded according to the corresponding method. 
To ensure fairness, each method's updates are constrained to a maximum of 100 steps.
These methods include the individual loss-driven simultaneous optimization (Simul-Ind), two variants of gradient adjustment methods termed as CGA and SGA, and the simultaneous optimization leveraging collective loss (Simul-Co) alongside our proposed AgA, where $\lambda=1$.
Each of these methods is fundamentally underpinned by the gradient ascent algorithm. 
The illustration consolidates data from 50 randomized runs initialized from diverse starting points. 
The `X' demarcated circles indicate the average actions mapped to each method. 
From the figure, it's apparent that Simul-Ind, CGA, SGA, SVO tend to converge towards scenarios where at least one player abstains from contributing, an occurrence often tagged as `free-riding'. 
In contrast, Simul-Co (represented by a yellow circle) and our AgA method (illustrated as a red circle) tend to converge towards scenarios characterized by enhanced contributions or 'altruism'.
The observed results underline that players employing the AgA method exhibit heightened altruistic tendencies relative to those operating under Simul-Co. Further, AgA exhibits a superior alignment towards socially optimal outcome point $(1,1)$.

\section{Experiment Details}
\label{appendix:exp_details}
\subsection{Sequential Social Dilemma Games: Cleanup and Harvest}
\label{appendix:exp_ssd}
SSD games~\citep{SSDOpenSource} implements the Harvest and Cleanup as grid world games.The agents use partially observed graphics observation, which contains a grid of $15\times 15$ centered on themselves. Compared to the original Cleanup and Harvest games, SSD games add a fire beam mechanic, where agents can fire on the grid to hit other partners. An agent will gain 1 reward if harvesting an apple and -1 reward if firing a beam. Besides, being hit by a beam will result in a 50 individual reward loss. Under this mechanic, selfish agents can easily use fire to prevent others from harvesting apples to gain more individual rewards but harm social welfare.

We utilized PPO algorithm in stable-baselines3~\citep{stable-baselines3} to implement the baselines and our methods, with all the agents using separated policy parameters for Simul-Ind and sharing same policy parameters for other experiments. For SVO, we modify the individual reward to be $r_i - \alpha(1-actan\left(\frac{sum_{j,j\neq i} r_j}{r_i}\right)$. To employ CGA and AgAs to PPO training process, we compute both individual and collective PPO-Clip policy losses and subsequently utilize them to calculate the adjusted gradient through automatic differentiation. We don't make changes to the critic loss nor the critic net optimization process.

Most experiments were conducted on a node with a Tesla V100 GPU (32GB memory) and 40 CPU cores. The hyper-parameters for PPO training are as follows.
\begin{itemize}
    \item The learning rate is 1e-4
    \item The PPO clipping factor is 0.2.
    \item The value loss coefficient is 1.
    \item The entropy coefficient is 0.001.
    \item The $\gamma$ is 0.99.
    \item The total environment step is 1e7 for Harvest and 2e7 for Cleanup.
    \item The environment episode length is 1000.
    \item The grad clip is 40.
\end{itemize}

\subsection{Selfish-MMM2}
\label{appendix:exp_smac2}

\begin{figure}[t]
    \centering
    \includegraphics[width=0.6\linewidth]{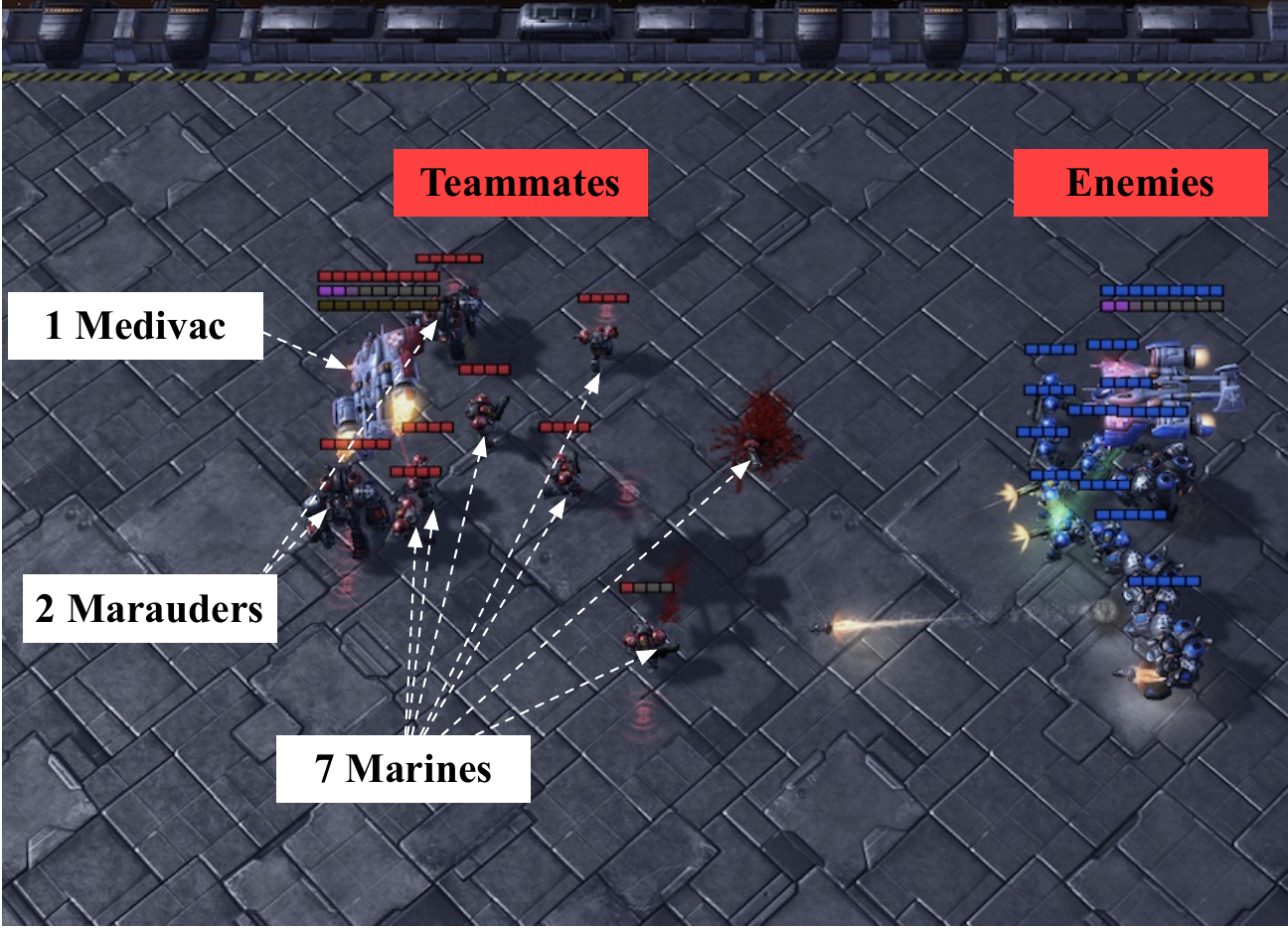}
        \caption{Semantic Diagram of MMM2 map in SMAC}
    \label{fig:mmm2}
\end{figure}

SMAC, often employed as a testbed within the sphere of cooperative MARL, is characterized by the shared reward system among players. Within this paper, we present the MMM2 map transformed as a versatile testbed designed for mixed-motive cooperation.
Specifically, as shown in Fig.~\ref{fig:mmm2}, MMM2 map encompasses 10 agents and 12 adversaries as enumerated below: \begin{itemize} \item Controlled Agents: Comprised of 1 Medivac, 2 Marauders, and 7 Marines. \item Adversaries: Incorporates 1 Medivac, 3 Marauders, and 8 Marines. \end{itemize}
Moreover, this setting embodies both heterogeneity and asymmetry, where agents within the same team exhibit diverse gaming skills. Additionally, the team compositions, for instance, variation in agent number and type, differ in both opposing factions.
Consequently, the intricate gaming mechanism, the presence of heterogeneous players, and the discernible asymmetry conspire to establish an exemplary milieu conducive for the exploration of mixed-motive issues.

In order to create a mixed-motive testbed, we adjusted the reward system hailing from the original SMAC environment. With this, we proposed a revamped reward function that features two distinct components: a reward for imposing damage and a penalty associated specifically with agent fatalities.
In a move to potentially intensify internal conflicts amongst agents, the reward for imposing damage is solely allocated to the agent responsible for executing the attack on the enemy. Additionally, the death of an agent results in the imposition of an extra penalty.
Consequently, this mechanism fosters an environment where agents are predisposed towards individual protection and separate reward acquisition, as opposed to collective cooperation.

Formally, the agent $i$'s reward at step $t$ is defined as follows:
\begin{equation}
    \text{delta-enemy}_i = \sum_{j\in \text{Enemy} \atop i\text{ hit } j \text{ at } t} [(\text{previous-health} - \text{current-health}) + (\text{previous-shield} - \text{current-shield})] \nonumber
\end{equation}

The individual reward for Player $i$ is determined as follows: If the player dies, then $r_i = \text{delta-enemy}_i - \beta\times \text{penalty}$; otherwise, if the player survives, then $r_i = \text{delta-enemy}_i $.

\textbf{Experiment settings. }  Simul-Ind leverages IPPO with recurrent policies and distinct policy networks. Conversely, Simul-Co employs MAPPO with recurrent policies and consolidated policy networks.
The implementation of the IPPO and MAPPO algorithms presented in this paper is founded on the methodology detailed in~\citep{mappo}.
We utilize gradient adjustment optimization methods, such as CGA and AgAs, derived from MAPPO, implementing gradient adjustments as outlined in Algorithm~\ref{alg: AgA}.

Most experiments were conducted on a node with two NVIDIA GeForce RTX 3090 GPUs and 32 CPU cores. The hyper-parameters for PPO-based training are as follows.
\begin{itemize}
    
    \item The learning rate is 5e-4
    \item The PPO clipping factor is 0.2.
    \item The value loss coefficient is 1.
    \item The entropy coefficient is 0.01.
    \item The $\gamma$ is 0.99.
    \item The total environment step is 1e7.
    \item The factor $\beta$ in reward function is 1.
    \item The environment episode length is 400.
\end{itemize}

\end{document}